\numberwithin{equation}{section}
\renewcommand{\section}{\@startsection{section}{1}{0pt}{20pt}{6pt}{\large\bf}}
\renewcommand{\@seccntformat}[1]{\csname the#1\endcsname.\ }
\def\footnoterule{\kern -3pt \hrule width 18 true cm \kern 2.6pt}
\def\ni{\noindent}
\def\vs{\vspace}
\def\hs{\hspace}
\def\EE{\mathsf E}
\def\PP{\mathsf P}
\def\cC{\mathcal C}
\def\cD{\mathcal D}
\def\cK{\mathcal K}
\def\I{\mathcal I}
\def\R{I\!\!R}
\def\L{I\!\!L}
\def\wt{\widetilde}
\def\e{\text{e}}
\newcommand{\eps}{\varepsilon}
\newcommand{\p}{\! +\! }
\newcommand{\m}{\! -\! }
\newtheorem{theorem}{Theorem}[section]
\newtheorem{prop}[theorem]{Proposition}
\newtheorem{remark}[theorem]{Remark}
\begin{document}

\title{Mean Reversion Trading with Sequential Deadlines and  Transaction Costs}
\author{Yerkin Kitapbayev\thanks{Questrom School of Business, Boston University, Boston MA 02215; email:\,\mbox{yerkin@bu.edu}. Corresponding author.
} \and Tim Leung\thanks{Applied Mathematics Department,  University of Washington, Seattle WA 98195; email:\,\mbox{timleung@uw.edu}.} }
\date{\today }

\maketitle
\abstract{We study  the optimal timing strategies for trading a mean-reverting price process  with  a finite deadline to enter and  a  separate finite deadline to exit the market.  The price  process  is modeled   by a   diffusion with an affine drift that encapsulates a number of well-known   models, \mbox{including} the Ornstein-Uhlenbeck (OU) model, Cox-Ingersoll-Ross (CIR) model, Jacobi model, and   inhomogeneous geometric Brownian motion (IGBM) model. We analyze three types of  \mbox{trading} strategies: (i) the long-short  (long to open, short to close) strategy; (ii) the short-long  (short to open, long to close) strategy, and (iii) the chooser strategy whereby the trader has the added flexibility to  enter the market by taking either a long or short position, and subsequently close the position. For each strategy, we solve an optimal double stopping problem with sequential deadlines, and determine the optimal timing of trades. Our solution methodology utilizes the local time-space calculus of  \cite{Peskir2005a}  to derive  nonlinear integral equations of Volterra-type that uniquely characterize the trading boundaries. Numerical implementation of the integral equations provides examples of the optimal trading boundaries.}


{\par \leftskip=2.6cm \rightskip=2.6cm \footnotesize
    \par}

\vspace{10pt}

\begin{small}

\noindent {{JEL Classification: C41, G11, G12}}

\noindent {{MSC2010:} Primary 91G20, 60G40. Secondary 60J60, 35R35, 45G10.}

\noindent {{Keywords:} spread trading, optimal stopping, mean reversion, free-boundary problem, local time}
\end{small}

\vs{-18pt}

\vs{-18pt}


\newpage
\section{Introduction}
A major class of trading strategies in various markets, including equity, currency, fixed income, and futures markets,  is based on taking advantage of the mean-reverting behavior of asset prices. The mean-reverting price process can be that of a a single asset or derivative, or a portfolio of simultaneous positions in two or more highly correlated or co-moving assets, like stocks or exchange-traded funds (ETFs), or derivatives, such as futures and swaps.    Some practical examples of mean-reverting prices can be found in a number of empirical studies, including pairs of stocks and/or ETFs \citep{gatev2006pairs,Avellaneda2010,Montana2011,meanreversionbook2016}, divergence between futures and its spot \citep{futuresBS,futuresDaiKwok}, and spreads between physical commodity and  commodity stocks/ETFs \citep{kanamura2010profit,GoldMinerSpreads2013}. There are also automated approaches for identifying mean-reverting portfolios \citep{d2011identifying}.

In this paper, we investigate the problem of  optimal timing to trade  a mean-reverting price process with finite deadlines to enter and   subsequently exit the market. We consider a general  mean-reverting diffusion with an affine drift that encapsulates a number of well-known  and important models, including  the Ornstein-Uhlenbeck (OU) model \citep{Ornstein1930}, Cox-Ingersoll-Ross (CIR) model \citep{CIR85}, 
inhomogeneous geometric Brownian motion (IGBM) model  \citep{ZhaoIGBM,PhamIGBM}, and Jacobi model \citep{AFP2016}.

We analyze three types  of  trading strategies: (i) the long-short strategy; (ii) the short-long strategy, and (iii) the chooser strategy strategy in Sections 3, 4, and 5, respectively.  With the long-short (respectively, short-long) strategy,  the trader commits to first establish a long (respectively, short) position and close it out with an opposite position. In addition, we recognize  that there is a \textit{chooser option} embedded in the trader's decision whereby the trader can enter the market by taking either a long or short position. We recognize that after the trader enters the market  (long/short), the ensuing exit rule must coincide with that from the  long-short/short-long strategy. This leads us to examine the associated entry timing  under this chooser strategy.  In addition, we incorporate a fixed transaction cost  in each trade under any strategy considered herein.

Another feature of our trading problems is the introduction of sequential  deadlines for entry and   exit. As a consequence,  the amount of time available for the trader to close an open position is fixed and independent of the time at which the position was established.   In our related study, \cite{KitaLeung},  similar trading problems have been analyzed but without transaction costs and  the trader faces  the \emph{same} deadline by which all trading decisions have to be made. As such, the sooner   the trader enters the market, the more   amount of time that is left to exit. In contrast, if the trader enters the market later, there will be less time left to exit. In particular, if the first trade is executed very near the deadline, then there is high chance that the trader will close out the trade by the deadline rather than at the optimal boundary. In contrast,  the trader in this paper has the same fixed time window to close out an open position regardless of the entry timing.  Moreover, without transaction costs,  the  trading problem under the  chooser strategy  in \cite{KitaLeung} is \emph{trivial} in the sense that it can be decomposed into two uncoupled single stopping problems as opposed to a  sequential/double stopping problem as in the current paper with transaction costs.  Furthermore, our current formulation and solution approach give more analytical and  computational   tractability so that  the trading problem can be fully solved even with the features of transaction costs and the option not to enter at all (see Section 5 below).  This particular case  but with  a single deadline was left unsolved as an open problem in Section 6 of \cite{KitaLeung}.

For each strategy type, we solve a finite-horizon optimal double  stopping problem to determine the optimal trading strategies. In contrast to some related mean reversion trading problems with an infinite trading horizon  \citep{zhang2008trading,Ekstrom2011,LeungLi2014OU,LeungLiWang2014CIR,LeungLiWang2014XOU,ZhangexpOU}, closed-form solutions cannot be expected. Hence, we provide analytical representations that uniquely characterize  the value functions and optimal strategies corresponding to the double stopping problems. Specifically, our solution approach employs the local time-space calculus of \cite{Peskir2005a}  to derive  nonlinear integral equations of Volterra-type that uniquely characterize the trading boundaries. The use of  local time-space calculus of \cite{Peskir2005a} has also been applied previously in trading and option pricing applications in the context of multiple stopping problems \citep{Kita1,KitaLeung}.

To our best knowledge, the resulting representations for the value functions and optimal entry/exit boundaries, as summarized in Theorems \ref{th:2},   \ref{th:4}, and \ref{th:ch}, are new.  Furthermore, we discuss the implementation of the integral equations  as well as the resulting numerical examples.  In particular, we demonstrate that the value function increases as the trader is given more time to wait to enter the market, but the incremental value diminishes for longer  trading horizon (Figure 3).   We also numerically find  that the chooser option leads to an expansion of the trader's continuation (waiting) region compared with long-short or short-long strategies, which means that with the chooser option it is optimal to delay market entry whether the first position is long or short (Figure 5).

Our paper contributes to the growing literature on finite-horizon mean reversion trading.    \cite{elliott2005pairs} model the market entry timing by the first passage time of an Ornstein-Uhlenbeck (OU) process and specify  a fixed future date for market exit.  \cite{song2009stochastic}  present   a numerical stochastic approximation scheme to determine  the optimal buy-low-sell-high strategy over a finite horizon.  On trading futures on a mean-reverting spot,  \cite{LeungLiLiZheng2015} and \cite{JiaoLi2016}   discuss an approach that involves numerically solving a series of   variational inequalities using finite-difference methods.    Recognizing   the converging spread between   futures and spot, \cite{futuresDaiKwok}  propose a Brownian bridge model to find  the optimal timing to capture the spread. They also incorporate the chooser option embedded in the trading problem.
 \vs{6pt}


\section{Problem overview}

We fix a finite trading horizon $[0,T]$, and  filtered probability space $(\Omega, \mathcal{F} , (\mathcal{F}_t), \PP)$, where $\PP$ is the subjective probability measure representing the belief  of the trader, and $(\mathcal{F}_t)_{0\le t\le T}$ is the filtration  generated by the standard Brownian motion $(B_t)_{t\ge 0}$.  Let us define the corresponding state space $\I=(a,b)$ for the underlying asset price process $X$ where both $a<b$ may be both finite or infinite. We also allow $\I$ to be either closed/open or semi-closed/open interval.
We  model the asset price $X$ by the  mean-reverting diffusion process
\begin{align}\label{OU} 
dX_t=\mu(\theta\m X_t)dt+\sigma(X_t) dB_t, \quad X_0=x\in\I,
\end{align}
 where the constant parameters $\mu>0$ and $\theta\in \I$, represent the speed of mean reversion and long-run mean of the process, respectively. The diffusion coefficient $\sigma(x)$ is locally Lipchitz and satisfies the condition of linear growth. The latter guarantees the existence and uniqueness of the strong solution to \eqref{OU}. This framework contains three important models: the Ornstein-Uhlenbeck (OU) model when $\sigma(x)\equiv\sigma$ and $\I=\R$,
the  Cox-Ingersoll-Ross (CIR) model when $\sigma(x)=\sigma\sqrt{x}$ and $\I=\R^+$, the inhomogeneous geometric Brownian motion (IGBM) model when $\sigma(x) = \sigma x$ and $\I=\R^+$,
and the so-called Jacobi model when $\sigma(x)=\sigma \sqrt{(x-a)(b-x)}/(\sqrt{b}-\sqrt{a})$ and $\I=(a,b)$.

At any fixed time $u\ge t \ge 0$, we denote the random variable   $X^{t,x}_u$ as  the value of $X_u$ given $X_t=x$, and it has some probability density function
$p(\wt{x};u,x,t)$ for $\wt{x}$ and $x\in \I$. The mean function   of $X^{t,x}_u$ is  denoted by $m(u\m t,x)=\EE[X^{t,x}_u]$, where $\EE$ is the expectation under $\PP$.
It is well known that
\begin{align}\label{m} 
m(u\m t,x)&=x\e^{-\mu (u-t)}+\theta(1\m \e^{-\mu (u-t)}).
\end{align}
 The infinitesimal operator of $X$ is given as
\begin{align}\label{OU-operator} 
\L_X F(x)=\mu(\theta\m x)F'(x)+\frac{\sigma^2(x)}{2} F''(x)
\end{align}
for $x\in \I$ and any function $F\in C^2 (\I).$
\vs{2pt}

While observing  the mean-reverting price process $X$, the trader has the timing option to  enter the market by establishing a position in $X,$ but she must do so at or before a given  deadline  $T>0.$ This finite horizon can be as short as minutes and hours, or as long as days and weeks, depending on the trader type and the market that $X$ belongs.  Should the trader find it optimal not to enter the market by $T$, she can wait past time $T$ without an open position, and consider  the trading problem afresh afterwards.   If the trader establishes a position in $X$ at time $\tau \le T,$ then she needs to subsequently close the position within a fixed time window of length $T'>0$. More precisely, she must liquidate by time $\tau+T'.$ In summary, the trader faces a sequence of two finite-horizon optimal timing problems, for which she needs to determine the respective optimal times to  enter and exit the market.

We analyze three trading  strategies: (i) the
\textit{long-short} strategy, whereby the trader takes long position in the spread first and later reverses the position to close (Section \ref{ls}); (ii) \textit{short-long} strategy, whereby the trader shorts the spread to start and then  close by taking the opposite (long) position  (Section \ref{sl}), and (iii)
 the \textit{chooser} strategy, i.e. the trader  can   take either a  long or short position in the spread when entering the market  (Section \ref{o}), and subsequently liquidates by taking the opposite position. Throughout,  the trader is assumed to be  \emph{risk-neutral} and seek to maximize expected discounted linear payoff under $\PP$.

 In all our trading problems, we  also incorporate  fixed transaction costs for entry and exit, so it only makes sense for the trader to open a position if the expected profit from the trade exceeds the fees.  Otherwise, the trader will find it optimal not to enter at all. This also allows us to examine the effects of transaction cost on the  value of the mean reversion trading opportunity.

\section{Optimal long-short strategy}\label{ls}

 In this section we consider the long-short strategy, whereby the trader, who seeks to enter the market by taking a long position at any time $\tau$ before the deadline $T$  and subsequently close  it  within a time window of length $T'$.
We formulate this problem sequentially by  backward induction. Suppose that the trader has already established a long position in $X$. In order to find the best time to liquidate  the position before $T'$, the trader solves the optimal stopping problem
 \begin{equation} \label{tc-problem-1}  
V^{1,L}(x)=\sup \limits_{0\le\zeta\le T'}\EE \left[\e^{-r\zeta} (X^x_\zeta-c)\right],
 \end{equation}
 for $x\in\I$  where $\zeta$ is a stopping time w.r.t. the filtration $(\mathcal{F}_t)$ and $c>0$ is the fixed transaction cost.  The value function $V^{1,L}$ represents the expected value of a long position in $X$  with the timing option to liquidate before the  deadline $T'$.

 Tracing backward in time, in order to establish this long position, the trader would need to pay the prevailing value of $X$ plus the transaction cost $c$.   Thus at entry time $\tau$, the trader pays $X^x_\tau+c$ and obtain the long position with the maximized expected value $V^{1,L}(X^{t,x}_\tau)$. Therefore, the difference $(V^{1,L}(X^{t,x}_\tau) - X^{t,x}_\tau - c)$ can be  viewed as the reward of the trader received upon entry. Naturally, the trader should only enter only if this difference is  strictly positive; otherwise, the trader has the right not to enter at all by the deadline $T$. Hence, the trader's optimal entry timing problem is given by
  \begin{equation} \label{tc-problem-2}  
V^{1,E}(t,x)=\sup \limits_{t\le\tau\le T}\EE \left[\e^{-r(\tau-t)}(V^{1,L}(X^{t,x}_\tau)\m X^{t,x}_\tau\m c)^+\right],
 \end{equation}
 for $(t,x)\in[0,T]\times \I$ where $\tau$ is a stopping time w.r.t. the filtration $(\mathcal{F}_t)$. Should the trader choose to enter the market, she must do so at or before the deadline $T$.

We first discuss the solution to the optimal exit problem \eqref{tc-problem-1} in Section~\ref{ls-exit}  and then solve the optimal entry  problem  \eqref{tc-problem-2} in Section~\ref{ls-entry}.

 \subsection{Optimal exit problem}\label{ls-exit}
We now solve the optimal exit timing  problem in \eqref{tc-problem-1}. To this end, let us define the time-dependent version of $V^{1,L}(x)$ by
 \begin{equation} \label{tc-problem-1-a}  
V^{1,L}(t,x;T')=\sup \limits_{t\le\zeta\le T'}\EE \left[\e^{-r(\zeta-t)} (X^{t,x}_\zeta-c)\right],
 \end{equation}
for $t\in[0,T')$ and $x\in\I$ so that $V^{1,L}(x)=V^{1,L}(0,x;T')$.
This problem has been already solved for the OU process in Section 3.1 of \cite{KitaLeung},  though we briefly extend the result for other mean-reverting processes described in \eqref{OU}.
Let us first apply Ito's formula along with the optional sampling theorem to get
\begin{align} \label{Ito-tc}  
\EE \left[\e^{-r(\zeta-t)} (X^{t,x}_\zeta-c)\right]=x-c+\EE \left[\int_t^\tau \e^{-r(s-t)}H^{1,L}(X^{t,x}_s)ds\right],
 \end{align}
 for $t\in[0,T')$ and $x\in\I$ and where the function $H^{1,L}$ is an affine function in $x$ defined by
 \begin{align} \label{H-1}  H^{1,L}(x)=-(\mu\p r)x+\mu\theta+rc.
\end{align}
We also define the unique root of function $H^{1,L}$ by
\begin{align}  
x^*:=&(\mu\theta+rc)/(\mu+r).\label{xstar1}\end{align}
We note that it is not guaranteed that $x^*\in \I  = (a,b)$. When $x^*$ falls outside of the state space, we summarize the scenarios as follows.

\begin{prop} If $x^*\le a$, then $H^{1,L}$ is always negative on $\I$ and it is optimal to liquidate the position at $t=0$.
 If $x^*\ge b$,  then $H^{1,L}$ is always positive on $\I$ and it is optimal to wait until the end and exit from the position at $t=T'$.

 \end{prop}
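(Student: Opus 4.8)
The plan is to read the sign of $H^{1,L}$ directly off the representation \eqref{Ito-tc} and to observe that a constant-sign integrand collapses the optimization onto an endpoint of the admissible time interval. First I would record that $H^{1,L}(x)=-(\mu\p r)x+\mu\theta+rc$ is affine with strictly negative slope $-(\mu\p r)<0$ (since $\mu>0$ and $r\ge0$), hence strictly decreasing on $\R$ with the single root $x^*$ of \eqref{xstar1}; consequently $H^{1,L}(x)>0$ for $x<x^*$ and $H^{1,L}(x)<0$ for $x>x^*$.

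Next I would use invariance of the state space: the strong solution of \eqref{OU} started in $\I$ remains in $\I$, so $X^{t,x}_s\in\I=(a,b)$ almost surely for every $s$. In the case $x^*\le a$, every point of $\I$ lies strictly to the right of $x^*$, so $H^{1,L}(X^{t,x}_s)<0$ for all $s$ almost surely; in the case $x^*\ge b$, every point of $\I$ lies strictly to the left of $x^*$, so $H^{1,L}(X^{t,x}_s)>0$ for all $s$ almost surely.

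The key step is then to invoke \eqref{Ito-tc}, which for a generic admissible stopping time $\zeta$ expresses the reward as $x\m c$ plus $\EE[\int_t^\zeta\e^{-r(s-t)}H^{1,L}(X^{t,x}_s)\,ds]$. When $x^*\le a$ the integrand is pathwise non-positive and vanishes exactly at $\zeta=t$, so the supremum over $t\le\zeta\le T'$ is attained at $\zeta=t$; for the original problem ($t=0$) this means liquidating at once, with $V^{1,L}(x)=x\m c$. When $x^*\ge b$ the integrand is pathwise positive, so $\zeta\mapsto\int_t^\zeta(\cdot)\,ds$ is strictly increasing and is maximized by the largest admissible value $\zeta=T'$; the optimal exit is thus to wait until the deadline.

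I expect no serious obstacle: once the sign of $H^{1,L}$ on $\I$ is fixed, the conclusion is immediate from the monotonicity of $\zeta\mapsto\int_t^\zeta(\cdot)\,ds$. The only points requiring care are (i) confirming that $X$ does not leave $\I$, so that the sign statement holds for all $s$ rather than merely at the start---this follows from the local-Lipschitz and linear-growth hypotheses on $\sigma$ together with the boundary behaviour of each listed model---and (ii) the integrability underpinning the optional-sampling step already used to derive \eqref{Ito-tc}; both are standard for the affine-drift diffusions under consideration.
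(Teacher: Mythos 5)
Your proof is correct and follows exactly the argument the paper intends: the representation \eqref{Ito-tc} is derived immediately before the proposition precisely so that the sign of the affine, strictly decreasing function $H^{1,L}$ on $\I$ (negative on $\I$ when $x^*\le a$, positive when $x^*\ge b$) forces the pathwise integral term to be monotone in the stopping time, making $\zeta=t$ (respectively $\zeta=T'$) optimal. Your added care about invariance of the state space and the optional-sampling integrability is a sensible refinement of details the paper leaves implicit, but the route is the same.
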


 \begin{remark}
 We note that for usual values of $r$ and $c$, the root $x^*$ is near $\theta$, which belongs to $\I$. Moreover,  this proposition is not relevant to the OU model as $\I=\R$ and trivial cases are automatically excluded.

 We also note that one scenario that can lead to $x^*\ge b$ is when the  transaction cost $c$ is very high (see \eqref{xstar1}). Then it makes economic  sense to   exit at the deadline $T'$ since  the trader can delay the   payment of the large fee. Secondly, if the risk-free rate $r$ is very large and $a$ is positive finite, then $x^*$ tends to $c$. If additionally   $c<a$, then it is financially intuitive to  immediately exit, as indicated  by Proposition 3.1(i), since the  trader would prefer to  obtain the cash now and invest it to the bank account at a high interest rate.

 \end{remark}

 Hence, to exclude two trivial cases, in what follows we assume that $x^*\in(a,b)$.
 Let us now define
 the   function
\begin{align} \label{L} 
\cK^{1,L}(t,u,x,z):=&-\e^{-r(u-t)}\EE \big[H^{1,L}(X^{t,x}_u) 1_{\{X^{t,x}_u \ge z\}}\big]\\=&-\e^{-r(u-t)}\int_{z}^{b\wedge\infty} H^{1,L}(\wt{x})\,p(\wt{x};u,x,t)\,d\wt{x},\notag
 \end{align}
 for $u\ge t>0$ and $x,z\in\I$, where $p$ is the probability density function of $X^{t,x}_u$  and
The theorem below summarizes the solution to the optimal exit problem.

\begin{theorem}\label{th:1}
The optimal stopping time  for \eqref{tc-problem-1-a} is given   by
\begin{align} \label{OST-2}  
\zeta_*^{1,L}=\inf\ \{\, t\leq s\leq T': X^{t,x}_{s}\ge b^{1,L}(s) \,\},
 \end{align}
 where the function   $b^{1,L}(\cdot)$ is the optimal exit boundary  that  can be characterized as the unique solution to a nonlinear integral equation of Volterra type
\begin{align}\label{th-1-2}  
b^{1,L}(t)\m c=\e^{-r(T-t)}m(T'\m t,b^{1,L}(t)) +\int_t^{T'} \cK^{1,L}(t,u,t,b^{1,L}(t),b^{1,L}(u))du,
\end{align}
for $t\in[0,T']$ in the class of continuous decreasing functions $t\mapsto b^{1,L}(t)$ with
\begin{equation}\label{b1LT} 
b^{1,L}(T')=x^*.
\end{equation}
The value function $V^{1,L}$ in \eqref{tc-problem-1-a} admits the  representation
\begin{align}\label{th-1-1}  
V^{1,L}(t,x;T')=\e^{-r(T'-t)}m(T'\m t,x) +\int_t^{T'}\cK^{1,L}(t,u,x,b^{1,L}(t\p u))du,
\end{align}
for $t\in[0,T']$ and $x\in \I$.
\end{theorem}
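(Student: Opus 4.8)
The plan is to follow the free-boundary / local time-space programme of \cite{Peskir2005a}, adapting the OU argument of \cite{KitaLeung} to the general coefficient $\sigma(x)$. The starting point is the identity \eqref{Ito-tc}: since $H^{1,L}$ is affine and strictly decreasing with unique root $x^*\in(a,b)$, we have $H^{1,L}(x)>0$ for $x<x^*$ and $H^{1,L}(x)<0$ for $x>x^*$. The negative drift contribution for large $x$ forces the continuation region to sit below the stopping region, so I would first show the stopping set is the one-sided region $\{(t,x):x\ge b^{1,L}(t)\}$. Concretely, I would (i) establish continuity of $(t,x)\mapsto V^{1,L}(t,x;T')$ together with $V^{1,L}(t,x;T')\ge x-c$, with equality exactly on the stopping set; (ii) use that $s\mapsto V^{1,L}(s,x;T')$ is decreasing in $s$ while $V^{1,L}\ge x-c$, so the $t$-sections of the stopping set expand as $t$ increases, combined with the sign structure of $H^{1,L}$ which makes those sections right-connected in $x$, to obtain a decreasing boundary $t\mapsto b^{1,L}(t)$ of the stated form; and (iii) identify the terminal value by a local analysis near $t=T'$: since instantaneous exercise is locally beneficial precisely where the drift $H^{1,L}$ is nonpositive, the boundary must satisfy $b^{1,L}(T')=x^*$, matching \eqref{b1LT}. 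Continuity of $b^{1,L}$ then follows from its monotonicity together with the continuity of $V^{1,L}$.

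With the geometry fixed, the next step is the smooth-fit condition and the regularity needed to run the change-of-variable formula. On the continuation region $V^{1,L}$ is $C^{1,2}$ and solves $\partial_t V^{1,L}+\L_X V^{1,L}-rV^{1,L}=0$, whereas on the stopping region $V^{1,L}(t,x;T')=x-c$, so that $(\partial_t+\L_X-r)(x-c)=H^{1,L}(x)$. I would prove that $V^{1,L}$ is $C^1$ across $b^{1,L}$ with $\partial_x V^{1,L}(t,b^{1,L}(t)\pm)=1$, the slope of the gain $x-c$; this is the smooth-fit identity that annihilates the local-time term. Applying the local time-space formula of \cite{Peskir2005a} to $\e^{-r(s-t)}V^{1,L}(s,X^{t,x}_s;T')$ on $[t,T']$, the local-time contribution on $b^{1,L}$ drops out by smooth fit, the stochastic integral is a true martingale after the usual localization and uniform-integrability check (using the linear growth of $\sigma$ and the mean formula \eqref{m}), and the Lebesgue integrand reduces to $H^{1,L}(X_s)\,\mathbf{1}_{\{X_s\ge b^{1,L}(s)\}}$. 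Taking $\PP$-expectations and using the terminal identity $V^{1,L}(T',x;T')=x-c$ then produces the early-exercise decomposition \eqref{th-1-1}, where the premium term is exactly $\int_t^{T'}\cK^{1,L}(t,u,x,b^{1,L}(u))\,du$ by the definition \eqref{L}.

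Setting $x=b^{1,L}(t)$ in \eqref{th-1-1} and invoking continuous fit $V^{1,L}(t,b^{1,L}(t);T')=b^{1,L}(t)-c$ yields the Volterra equation \eqref{th-1-2}, so $b^{1,L}$ solves it. The substantive part, and the step I expect to be the main obstacle, is uniqueness: showing $b^{1,L}$ is the only continuous decreasing solution with terminal value $x^*$. For this I would run the Peskir–Shiryaev verification scheme: given any candidate $\wt b$ solving \eqref{th-1-2} in the admissible class, define $U^{\wt b}(t,x)$ by the right-hand side of \eqref{th-1-1} with $\wt b$ in place of $b^{1,L}$, and then (a) show, by applying the local time-space formula in reverse to $U^{\wt b}$, that $U^{\wt b}(t,x)=x-c$ for $x\ge\wt b(t)$; (b) deduce $U^{\wt b}(t,x)\ge x-c$ for all $x$, using the sign of $H^{1,L}$ below $x^*$; (c) show $U^{\wt b}\le V^{1,L}$ via an optional-sampling/supermartingale argument; and (d) combine these to force first $\wt b\le b^{1,L}$ and then $\wt b\ge b^{1,L}$, hence $\wt b\equiv b^{1,L}$.

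The delicate points in this last stage are justifying the integrability and martingale conditions uniformly in the candidate boundary for the CIR, IGBM and Jacobi coefficients, where $\sigma$ degenerates at an endpoint of $\I$, and controlling the behaviour of the transition density $p(\wt x;u,x,t)$ near $a$ and $b$ so that $\cK^{1,L}$ and the representation \eqref{th-1-1} remain well defined and continuous. These are precisely where the extension beyond the OU case of \cite{KitaLeung} requires the most care; for the OU model the state space is all of $\R$ and these boundary issues disappear, so there the argument specializes to the one already recorded there.
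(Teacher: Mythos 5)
Your proposal is correct and takes essentially the same route as the paper: the paper establishes the It\^o decomposition and the sign structure of $H^{1,L}$ with unique root $x^*$, then relies on Theorem 1 of \cite{KitaLeung} and the local time-space calculus of \cite{Peskir2005a} for the smooth-fit representation, the Volterra equation via continuous fit at the boundary, and the standard four-step uniqueness verification --- precisely the programme you lay out (indeed with the correct kernel argument $b^{1,L}(u)$, silently fixing the typos in the paper's displays \eqref{th-1-2} and \eqref{th-1-1}). Your closing caution about the degenerate endpoints of $\sigma(x)$ for the CIR, IGBM and Jacobi models is a genuine point the paper passes over, but it refines rather than alters the method.
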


\begin{figure}[t]
\begin{center}
\includegraphics[scale=0.7]{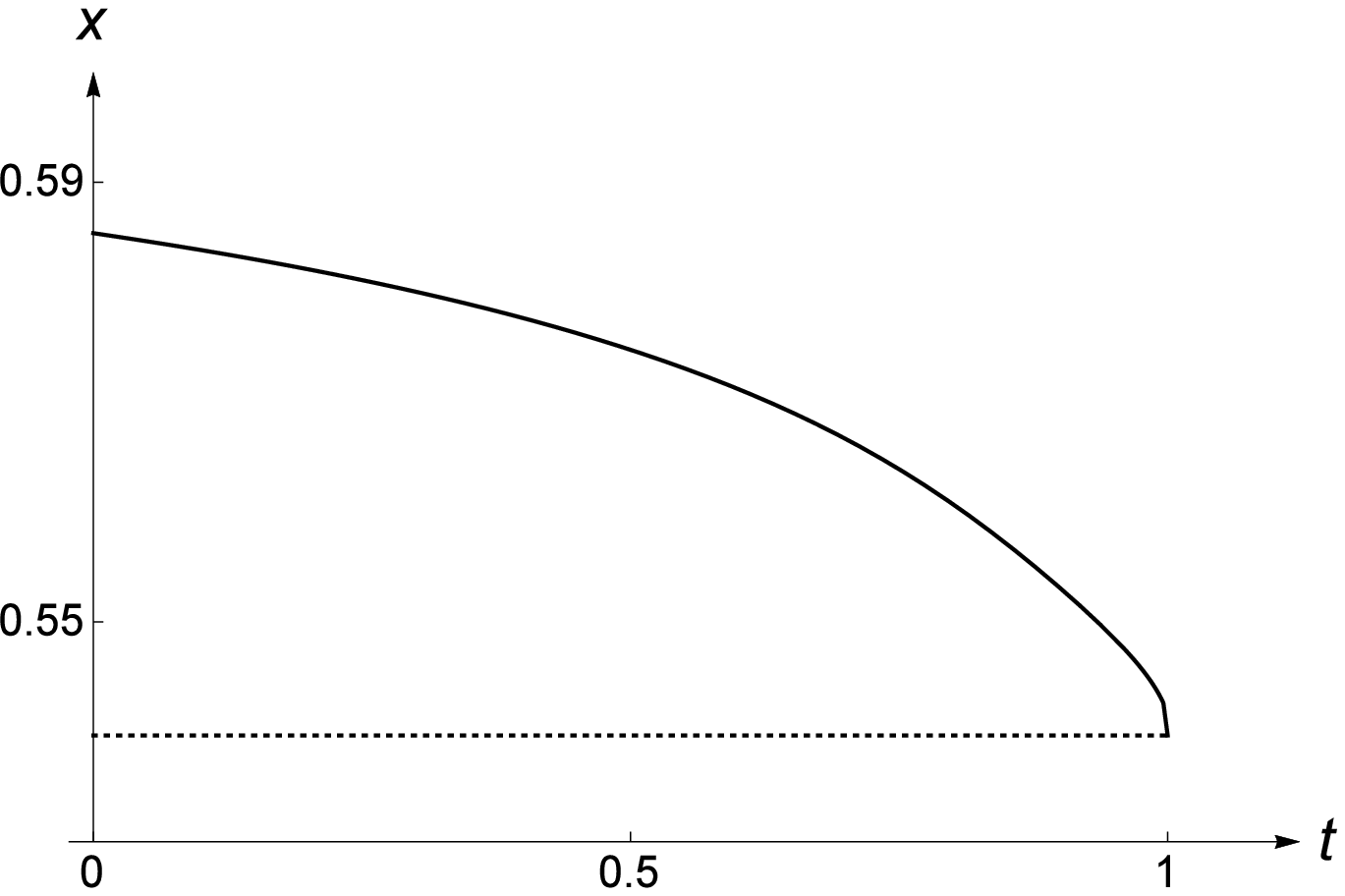}
\end{center}
{\par \leftskip=1.6cm \rightskip=1.6cm \small \ni  
\textbf{Figure 1.} The optimal exit boundary $b^{1,L}(t)$, $t\in[0,T'],$
corresponding to  the long-short strategy problem computed by numerically solving the  integral equation \eqref{th-1-2} under the OU model. The parameters are:  $T'=1$ year, $r=0.01$, $c=0.01$, $\theta=0.54$, $\mu=16$, $\sigma=0.16$, $b^{1,L}(T')=x^*=0.539669.$  A time discretization with  $500$ steps for the interval $[0,T']$ is used.
\par}
\end{figure}

 We also recall that $V^{1,L}(t,x)$ and $b^{1,L}(t)$ solve the following free-boundary problem in $[0,T']\times\I$:
 \begin{align} \label{PDE} 
 &V^{1,L}_t \p\L_X V^{1,L} \m rV^{1,L}=0  &\text{for}\;  x<b^{1,L}(t),\\
\label{IS}&V^{1,L}(t,b^{1,L}(t))=b^{1,L}(t)-c &\text{for}\; t\in[0,T'),\\
\label{SP}&V^{1,L}_x (t,b(t))=1 &\text{for}\; t\in[0,T'), \\
\label{FBP1}&V^{1,L}(t,x)>x-c &\text{for}\;  x<b^{1,L}(t),\\
\label{FBP2}&V^{1,L}(t,x)=x-c &\text{for}\;  x\ge b^{1,L}(t).
\end{align}
By Theorem 1 of \cite{KitaLeung}, we have the following properties:
\begin{align} \label{Prop-1} 
&V^{1,L}\;\text{is continuous on}\; [0,T']\times\I,\\
\label{Prop-3}&x\mapsto V^{1,L}(t,x)\;\text{is increasing and convex on $\I$ for each $t\in[0,T']$},\\
\label{Prop-4}&t\mapsto V^{1,L}(t,x)\;\text{is decreasing on $[0,T']$ for each $x\in \I$},\\
\label{Prop-5}&t\mapsto b^{1,L}(t)\;\text{is decreasing and continuous on $[0,T']$ with $b^{1,T}=x^*$.}
\end{align}

  Figure 1 displays the optimal exit boundary $b^{1,L}(t)$, over $t\in[0,T']$,   obtained by numerically solving the  integral equation \eqref{th-1-2} under the OU model (i.e. $\sigma(x) \equiv \sigma$). As expected from \eqref{Prop-5}, the boundary is decreasing and continuous on $[0,T']$, with limit $b^{1,L}(T')=x^*=0.539669$ (see \eqref{xstar1}).  Interestingly, this limit depends on the long-run mean $\theta$, and speed of mean reversion $\mu$, along with the interest rate $r$ and transaction cost $c$, but it does not depend on the volatility parameter $\sigma$. In fact, since  Theorem \ref{th:1} holds for a general volatility function $\sigma(x)$, this means that   mean-reverting models with the same values for the parameters $(\theta, \mu, r, c)$ but different specifications for $\sigma(x)$ will still  have the same limit given in \eqref{b1LT} at $T'$. This is consequence of the linear payoff so that the function $H^{1,L}$ does not depend on $\sigma(x)$.

 \subsection{Optimal entry problem}\label{ls-entry}

Having solved  for the value function $V^{1,L}(t,x;T')$  for the optimal exit timing problem, we   obtain  in particular
\begin{equation}\label{payoff-1-L}  
V^{1,L}(x)=V^{1,L}(0,x;T'), \qquad x\in\I,
\end{equation}
using \eqref{th-1-1}.  In turn, $V^{1,L}(x)$ is an input to the optimal entry problem. Indeed, we recall  \eqref{tc-problem-2}, and denote the entry payoff function by
 \begin{align} 
 G^{1,E}(x)=(V^{1,L}(x)-x-c)^+, \qquad x\in\I.
 \end{align}
 Then, the optimal timing to enter the market is found from the following optimal stopping problem
 \begin{equation} \label{tc-problem-3}  
V^{1,E}(t,x)=\sup \limits_{t\le\tau\le T}\EE \left[\e^{-r(\tau-t)}G^{1,E}(X^{t,x}_\tau)\right],
 \end{equation}
where the supremum is taken over all $(\mathcal{F}_t)$- stopping times $\tau\in[t,T]$. We define the threshold $\gamma^{1,L}$ as the unique root of the equation $V^{1,L}(x)-x-c=0$ so that using the convexity of $V^{1,L}$ we can rewrite
\begin{align}\label{G1E}  
G^{1,E}(x)=(V^{1,L}(x)-x-c)1_{\{x<\gamma^{1,L}\}},
\end{align}
for every $x\in\I$. In other words, it is not rational to enter into position when $X>\gamma^{1,L}$ as the expected profit is negative and thus the strategy of not trading at all strictly dominates the immediate entry.
\vs{6pt}

 Let us define the continuation and entry regions, respectively,
\begin{align} \label{C-2}  
&\cC^{1,E}= \{\, (t,x)\in[0,T)\! \times\! \I:V^{1,E}(t,x)>G^{1,E}(x)\, \} ,\\
 \label{D-2}&\cD^{1,E}= \{\, (t,x)\in[0,T)\! \times\! \I:V^{1,E}(t,x)=G^{1,E}(x)\, \}.
 \end{align}
Then the optimal entry time in \eqref{tc-problem-3} is given by
\begin{align} \label{OST-2}  
\tau^{1,E}_*=\inf\ \{\ t\leq s\leq T:(s,X^{t,x}_{s})\in\cD^{1,E}\ \}.
 \end{align}

  We employ  the local time-space formula on curves  \citep{Peskir2005a}) for $e^{-r(s-t)}G^{1,E}(X^{t,x}_s)$, the optional sampling theorem and the smooth-fit property \eqref{SP} at $b^{1,L}(0)$ to obtain
\begin{align} \label{Ito-tc}  
 \EE \left[\e^{-r(\tau-t)}G^{1,E}(X^{t,x}_\tau)\right]=&\;G^{1,E}(x)+\EE \left[\int_t^\tau \e^{-r(s-t)}H^{1,E}(X^{t,x}_s)ds\right]\\
 &+\EE \left[\int_t^\tau \e^{-r(s-t)}(1-V^{1,L}_x(X^{t,x}_s))d\ell^{\gamma^{1,L}}_s\right],\nonumber
 \end{align}
for $t\in[0,T)$, $x\in\I$ and arbitrary stopping time $\tau$ of process $X$. The function $H^{1,E}$ is defined as  $H^{1,E}(x):=(\L_X G^{1,E} \m rG^{1,E})(x)$ for $x\in\I$. By \eqref{G1E} and \eqref{PDE}, it is  given as
\begin{align} \label{H-tc}  
H^{1,E}(x)=((\mu\p r)x-\mu\theta+rc )1_{\{x<\gamma^{1,L}\}},
\end{align}
 for $x\in \I$. We define its root $x_*=(\mu\theta-rc)/(\mu+r)\le x^*$.  In \eqref{Ito-tc}, we have also introduced   the local time  $(\ell^{\gamma^{1,L}}_s)_{s\ge t}$ that the process $X$ spends at $\gamma^{1,L}$, namely,
\begin{align} \label{LT} 
 \ell^{\gamma^{1,L}}_s :=\PP-\lim_{\eps \downarrow 0}\frac{1}{2\eps}\int_t^{s} 1_{\{\gamma^{1,L}-\eps<X^{t,x}_u<\gamma^{1,L}+\eps\}}d\left \langle X \right \rangle_u.
 \end{align}

 To exclude the degenerate cases we provide the following proposition.

 \begin{prop}  If $x_*\wedge \gamma^{1,L}\le a$, then $H^{1,E}$ is always non-negative on $\I$ and it is optimal to wait until the end and enter at $t=T$.  If $x_*\wedge \gamma^{1,L}\ge b$,  then $H^{1,E}$ is always negative on $\I$ and local time term is zero, thus it is optimal to enter immediately at $t=0$. \end{prop}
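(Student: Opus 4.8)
The plan is to read the sign of $H^{1,E}$ directly off its explicit form \eqref{H-tc} and then feed that sign, together with the sign of the local-time coefficient, into the local time-space representation \eqref{Ito-tc}. Write $g(x):=(\mu\p r)x-\mu\theta+rc$, so that $H^{1,E}(x)=g(x)\,1_{\{x<\gamma^{1,L}\}}$. Since $\mu\p r>0$, the affine map $g$ is strictly increasing with unique root $x_*=(\mu\theta-rc)/(\mu\p r)$, hence $g(x)<0$ iff $x<x_*$. Combining this with the indicator, $H^{1,E}(x)$ is strictly negative exactly when both $x<x_*$ and $x<\gamma^{1,L}$, i.e.\ precisely when $x<x_*\wedge\gamma^{1,L}$, and is non-negative otherwise. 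This single comparison is the engine for both halves of the claim.

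For the first assertion I would assume $x_*\wedge\gamma^{1,L}\le a$. Then every $x\in\I=(a,b)$ satisfies $x>a\ge x_*\wedge\gamma^{1,L}$, so by the characterization above $H^{1,E}\ge0$ on $\I$. It remains to control the local-time term, whose coefficient is $1-V^{1,L}_x(\gamma^{1,L})$; this vanishes unless $\gamma^{1,L}\in\I$, and when $\gamma^{1,L}\in\I$ I would invoke the convexity of $x\mapsto V^{1,L}(x)$ from \eqref{Prop-3} (so $V^{1,L}_x$ is non-decreasing) together with \eqref{FBP2}, which gives $V^{1,L}(x)=x-c$ and hence $V^{1,L}_x=1$ on the exit region $x\ge b^{1,L}(0)$, matched by the smooth-fit identity \eqref{SP}. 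These force $V^{1,L}_x\le1$ throughout $\I$, so $1-V^{1,L}_x(\gamma^{1,L})\ge0$. With $H^{1,E}\ge0$ and the local-time coefficient $\ge0$, the right-hand side of \eqref{Ito-tc} is non-decreasing in $\tau$, so the supremum in \eqref{tc-problem-3} is attained at $\tau=T$: it is optimal to wait until the deadline.

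For the second assertion I would assume $x_*\wedge\gamma^{1,L}\ge b$. Then every $x\in\I$ satisfies $x<b\le x_*\wedge\gamma^{1,L}$, so $H^{1,E}<0$ on $\I$. Moreover $\gamma^{1,L}\ge b$ forces $\gamma^{1,L}\notin\I$, so $X$ never visits $\gamma^{1,L}$ in the interior and the local time $\ell^{\gamma^{1,L}}$ is identically zero, killing the last term of \eqref{Ito-tc}. The remaining expression $G^{1,E}(x)+\EE[\int_t^\tau\e^{-r(s-t)}H^{1,E}(X^{t,x}_s)\,ds]$ is then strictly decreasing in $\tau$, so the supremum is attained at $\tau=t$, i.e.\ immediate entry at $t=0$ is optimal.

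The only genuinely non-routine step is the sign of the local-time coefficient in the first case; everything else is a direct sign comparison in $x$. I would pin down $V^{1,L}_x\le1$ exactly as above, by pairing convexity \eqref{Prop-3} with $V^{1,L}_x\equiv1$ on $\{x\ge b^{1,L}(0)\}$. As a by-product, evaluating $V^{1,L}(\gamma^{1,L})=\gamma^{1,L}\p c$ against $V^{1,L}(x)=x-c$ on the exit region rules out $\gamma^{1,L}\ge b^{1,L}(0)$ (which would force $c=0$), confirming that $\gamma^{1,L}$ sits strictly inside the exit continuation region whenever it lies in $\I$, and thereby that the kink generating the local time is indeed of the expected one-sided sign.
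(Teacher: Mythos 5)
Your argument is correct and matches the paper's (unwritten but clearly intended) proof: the paper states this proposition without a formal proof, relying on exactly the sign analysis you perform --- reading off that $H^{1,E}(x)<0$ precisely for $x<x_*\wedge\gamma^{1,L}$ from \eqref{H-tc} and feeding this, together with the non-negativity of the local-time coefficient, into the representation \eqref{Ito-tc}. Your explicit verification that $V^{1,L}_x\le 1$ (convexity \eqref{Prop-3} plus $V^{1,L}_x\equiv 1$ on the exit region) and that $\gamma^{1,L}<b^{1,L}(0)$ are details the paper leaves implicit (it invokes $V^{1,L}_x\le 1$ only later, in Section 5), so you have supplied a complete version of the same argument.
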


 From now on we assume that $x_*\wedge \gamma^{1,L}\in(a,b)$.
In this case, the function $H^{1,E}$ is negative when $x<x_*\wedge \gamma^{1,L}$. It is not optimal to enter into the position when $x_*\wedge \gamma^{1,L}<X_s$
as $H^{1,E}$ is non-negative there and the local time term is always non-negative. Also, near $T$ it is optimal to enter at once when $X_s<x_*\wedge \gamma^{1,L}$ due to lack of time to compensate the negative $H^{1,E}$. This gives us the terminal condition of the exercise boundary  at $T$ (see \eqref{Prop-5-tc}).
We also note that if $a=-\infty$, the equation \eqref{Ito-tc} shows that the entry region is non-empty for all $t\in[0,T)$, as for large negative $x$ the integrand $H^{1,E}$
is very negative and thus it is optimal to enter immediately due to the presence of the finite deadline $T$.

 Since the payoff function $G^{1,E}$ and underlying process $X$ are time-homogenous, we have that the entry region $\cD^{1,E}$ is  right-connected.
Next, we show that  $\cD^{1,E}$ is down-connected.
Let us take $t>0$ and $x<y<x_*\wedge \gamma^{1,L}$ such that $(t,y)\in \cD^{1,E}$.
Then, by right-connectedness of the entry region, we have that $(s,y)\in \cD^{1,E}$ as well for any $s>t$. If we now run the process $(s,X_s)_{s\ge t}$ from $(t,x)$, we cannot hit the level $x_*\wedge \gamma^{1,L}$ before entry (as $x<y$), thus the local time term in \eqref{Ito-tc} is 0 and integrand $H^{1,E}$ is strictly negative before $\tau^{1,E}_*$.
Therefore it is optimal to entry at $(t,x)$ and we obtain down-connectedness of the entry region $\cD^{1,E}$.

Hence there exists an optimal entry boundary $b^{1,E}$ on $[0,T]$ and the corresponding stopping time 
\begin{align}   
&\tau=\inf\ \{\, t\leq s\leq T: X^{t,x}_{s}\le b^{1,E}(s)\, \},
 \end{align}
is optimal for the entry timing problem  \eqref{tc-problem-2} and $a\le b^{1,E}(t)<x_*\wedge \gamma^{1,L}$ for $t\in[0,T)$. Moreover, $b^{1,E}$ is increasing on $[0,T]$ and $b^{1,E}(T-)=x_*\wedge \gamma^{1,L}$.

The value function $V^{1,E}$ and boundary $b^{1,E}$ solve the following free-boundary problem:
\begin{align} \label{PDE-tc}  
&V^{1,E}_t \p\L_X V^{1,E} \m rV^{1,E}=0 &\hs{-30pt}\text{in}\;  \cC^{1,E},\\
\label{IS-tc}&V^{1,E}(t,b^{1,E}(t))=V^{1,L}(t,b^{1,E}(t))\m b^{1,E}(t)\m c &\hs{-30pt}\text{for}\; t\in[0,T),\\
\label{SF-tc}&V^{1,E}_x(t,b^{1,E}(t))=V^{1,L}_x(t,b^{1,E}(t))\m 1&\hs{-30pt}\text{for}\; t\in[0,T), \\
\label{FBP1-tc}&V^{1,E}(t,x)>G^{1,E}(x) &\hs{-30pt}\text{in}\; \cC^{1,E},\\
\label{FBP2-tc}&V^{1,E}(t,x)=G^{1,E}(x) &\hs{-30pt}\text{in}\; \cD^{1,E},
\end{align}
where the continuation set $\cC^{1,E}$ and the entry region $\cD^{1,E}$ are given by
\begin{align} \label{CC-tc}  
&\cC^{1,E}= \{\, (t,x)\in[0,T)\! \times\! \I: x>b^{1,E}(t)\, \}, \\ 
 \label{DD-tc}&\cD^{1,E}= \{\, (t,x)\in[0,T)\! \times\! \I:x\le b^{1,E}(t)\, \}.
 \end{align}
Using standard arguments, it follows  that
\begin{align} \label{Prop-1-tc}  
&V^{1,E}\;\text{is continuous on}\; [0,T]\times\I\,,\\
\label{Prop-3-tc}&x\mapsto V^{1,E}(t,x)\;\text{is convex on $\I$ for each $t\in[0,T]$}\,,\\
\label{Prop-4-tc}&t\mapsto V^{1,E}(t,x)\;\text{is decreasing on $[0,T]$ for each $x\in \I$}\,,\\
\label{Prop-5-tc}&t\mapsto b^{1,E}(t)\;\text{is continuous on $[0,T]$ with}\; b^{1,E}(T-)=x_*\wedge \gamma^{1,L}.
\end{align}

 Let us define the function $\cK^{1,E}$ as
\begin{align}
\label{L-tc}
\cK^{1,E}(t,u,x,z)&=-\e^{-r(u-t)}\EE \big[H^{1,E}( X^{t,x}_u) 1_{\{X^{t,x}_u \le z\}}\big]\\
&=-\e^{-r(u-t)}\int_{-\infty\vee a}^z H^{1,E}(\wt{x})\,p(\wt{x};u,x,t)\,d\wt{x},
 \end{align}
 for $u\ge t\ge 0$ and $x,z\in\I$. In particular, when $X$ is an  OU process, then the function $\cK^{1,E}$ can be efficiently rewritten in terms of standard normal cumulative and probability
 distribution functions. For more complicated distributions, the computation of    $\cK^{1,E}(t,u,x,z)$  in \eqref{L-tc} may require   numerical univariate integration.
  \vs{6pt}

Applying local time-space formula (\cite{Peskir2005a}) to the discounted process $\e^{-r(s-t)}V^{1,E}(s,X^{t,x}_s)$,  along with \eqref{PDE-tc}, the definition of $H^{1,E}$, and  the smooth-fit property \eqref{SF-tc}, we obtain 
 \begin{align} \label{LTSF-1}  
&\e^{-r(s-t)}V^{1,E}(s,X^{t,x}_s)\\
=\;&V^{1,E}(t,x)+M_s\nonumber\\
 &+ \int_t^{s}
\e^{-r(u-t)}\left(V^{1,E}_t \p\L_X V^{1,E}
\m rV^{1,E}\right)(u,X^{t,x}_u)
 I(X^{t,x}_u \neq b^{1,E}(u))du\nonumber\\
 &+\frac{1}{2}\int_s^{t}
\e^{-r(u-t)}\left(V^{1,E}_x (u,X^{t,x}_u +)-V^{1,E}_x (u,X^{t,x}_u -)\right)I\big(X^{t,x}_u=b^{1,E}(u)\big)d\ell^{b^{1,E}}_u\nonumber\\
  =\;&V^{1,E}(t,x)+M_s +\int_t^{s}
\e^{-r(u-t)}H^{1,E}(X^{t,x}_u) I(X^{t,x}_u \ge b^{1,E}(u))du\,,\nonumber
  \end{align}
for $s\in[t,T]$ where    $M=(M_s)_{s\ge t}$ is a martingale, and $(\ell^{b^{1,E}}_s)_{s\ge t}$ is the local time process of $X^x$ at the boundary $b^{1,E}$ given by
\begin{align} \label{LT-2}  
 \ell^{b^{1,E}}_s:=\PP-\lim_{\eps \downarrow 0}\frac{1}{2\eps}\int_t^{s} 1_{\{b^{1,E}(u)-\eps<X^{t,x}_u<b^{1,E}(u)+\eps\}}d\left \langle X\right \rangle_u.
 \end{align}
 We then   obtain the main result of this section.

\begin{theorem}\label{th:2}
The optimal entry boundary $b^{1,E}: [0,T]\mapsto \mathbb{R}$ can be characterized as the unique solution to the recursive integral equation
\begin{align}\label{th-1-tc}  
V^{1,L}(b^{1,E}(t))\m b^{1,E}(t)\m c=&e^{-r(T-t)}\EE [G^{1,E}(X^{t,b^{1,E}(t)}_{T})]\\
&+\int_t^{T} \cK^{1,E}(t,u,b^{1,E}(t),b^{1,E}(u))du\,,\notag
\end{align}
for $t\in[0,T]$ in the class of continuous increasing functions with $b^{1,E}(T-)=\gamma^{1,L}\wedge x_*$.
 The value function $V^{1,E}$ admits the representation
 \begin{align}\label{th-2-tc}  
V^{1,E}(t,x)=e^{-r(T-t)}\EE [G^{1,E}(X^{t,x}_{T})]+ \int_t^{T}\cK^{1,E}(t,u,x,b^{1,E}(u))du\,,
\end{align}
for $t\in[0,T]$ and $x\in \I$.
\end{theorem}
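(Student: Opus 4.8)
The plan is to prove the two assertions separately: first the integral representations \eqref{th-1-tc}--\eqref{th-2-tc}, which follow from the machinery already assembled, and then the uniqueness of the boundary as a solution of the nonlinear Volterra equation, which is where the real work lies.

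For the representation \eqref{th-2-tc} I would start from the local time-space identity \eqref{LTSF-1}, set $s=T$, and take $\EE$ on both sides. Since $M$ is a true martingale its expectation vanishes, and the left-hand side becomes $\e^{-r(T-t)}\EE[V^{1,E}(T,X^{t,x}_T)]$. Because the trader is either in or out of the market at the deadline, the terminal condition $V^{1,E}(T,x)=G^{1,E}(x)$ holds, turning the left-hand side into $\e^{-r(T-t)}\EE[G^{1,E}(X^{t,x}_T)]$. Moving the integral term across and applying Fubini's theorem together with the definition \eqref{L-tc} of $\cK^{1,E}$ identifies $-\EE[\int_t^T \e^{-r(u-t)}H^{1,E}(X^{t,x}_u)I(X^{t,x}_u\le b^{1,E}(u))\,du]$ with $\int_t^T \cK^{1,E}(t,u,x,b^{1,E}(u))\,du$, which gives \eqref{th-2-tc}. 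To obtain \eqref{th-1-tc} I would evaluate \eqref{th-2-tc} at $x=b^{1,E}(t)$ and use that the boundary lies in $\cD^{1,E}$ strictly below $\gamma^{1,L}$, so by \eqref{FBP2-tc} and the definition of $G^{1,E}$ one has $V^{1,E}(t,b^{1,E}(t))=G^{1,E}(b^{1,E}(t))=V^{1,L}(b^{1,E}(t))-b^{1,E}(t)-c$; substituting this on the left produces the stated equation.

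For uniqueness I would follow the four-step verification scheme standard in the local time-space approach (as in \cite{KitaLeung}). Let $c(\cdot)$ be any continuous increasing function on $[0,T]$ with $c(T-)=\gamma^{1,L}\wedge x_*$ solving \eqref{th-1-tc}, and define $U^c(t,x)$ by the right-hand side of \eqref{th-2-tc} with $b^{1,E}$ replaced by $c$. Step one shows $U^c(t,x)=G^{1,E}(x)$ for all $x\le c(t)$: applying the local time-space formula to $\e^{-r(s-t)}G^{1,E}(X^{t,x}_s)$ and invoking the Markov property reduces this identity to \eqref{th-1-tc} evaluated along $c$, which holds by hypothesis. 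Step two verifies that $U^c$ solves $U^c_t+\L_X U^c-rU^c=0$ for $x>c(t)$ and fits continuously onto $G^{1,E}$ across the curve. Step three applies the local time-space formula to $\e^{-r(s-t)}U^c(s,X^{t,x}_s)$ and uses the sign of $H^{1,E}$ (negative below $x_*\wedge\gamma^{1,L}$, nonnegative above) and the non-negativity of the local-time contribution to show that this process is a supermartingale dominating $G^{1,E}$, so $U^c\ge V^{1,E}$, while stopping at the first entry into $\{x\le c(t)\}$ makes it a martingale and yields $U^c(t,x)=\EE[\e^{-r(\tau_c-t)}G^{1,E}(X_{\tau_c})]\le V^{1,E}(t,x)$; together these give $U^c=V^{1,E}$. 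Step four then concludes that the stopping sets, and hence the boundaries, coincide, forcing $c\equiv b^{1,E}$.

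The main obstacle is Step four, together with the sign bookkeeping underlying Step three. Establishing the representations is essentially an exercise in applying \eqref{LTSF-1} and Fubini, but ruling out spurious solutions of the Volterra equation requires showing that no admissible curve other than $b^{1,E}$ can satisfy the same equation under the same terminal value. I would argue by contradiction, treating the cases $c>b^{1,E}$ and $c<b^{1,E}$ separately on a maximal interval where the curves separate, and exploiting the strict sign of $H^{1,E}$ over the intervening region to contradict the (super)martingale identities from Step three. A secondary technical point is the integrability required for $M$ in \eqref{LTSF-1} to be a genuine martingale and for the optional sampling in Step three to apply; for the CIR and IGBM specifications this must be checked with care near the boundary $a=0$ of the state space, using the linear growth of $\sigma$ and the affine form of $H^{1,E}$.
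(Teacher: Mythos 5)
Your first paragraph is exactly the paper's proof of Theorem \ref{th:2}: set $s=T$ in \eqref{LTSF-1}, take expectations, remove the martingale term by optional sampling, use the terminal condition $V^{1,E}(T,\cdot)=G^{1,E}(\cdot)$ together with the definition \eqref{L-tc} of $\cK^{1,E}$ to get \eqref{th-2-tc}, then insert $x=b^{1,E}(t)$ and use the pasting condition \eqref{IS-tc} to get \eqref{th-1-tc}. (Incidentally, your indicator $I(X^{t,x}_u\le b^{1,E}(u))$ is the correct one, consistent with \eqref{L-tc} and \eqref{DD-tc}; the final line of \eqref{LTSF-1} as printed carries a $\ge$, which is a typo.) Where you genuinely diverge from the paper is that the paper's printed proof stops there: the uniqueness of $b^{1,E}$ in the stated class is asserted in the theorem but never argued, being implicitly deferred to the standard verification scheme of \cite{Peskir2005a} and \cite{KitaLeung}. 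Your four-step plan supplies precisely that missing scheme, so your proposal is in this respect more complete than the paper's own proof, and its main components — $U^c=G^{1,E}$ below $c$ via the Markov property and the equation along $c$; the hitting-time martingale identity giving $U^c(t,x)=\EE[\e^{-r(\tau_c-t)}G^{1,E}(X_{\tau_c})]\le V^{1,E}(t,x)$; and the final contradiction on a maximal interval of separation using the strict negativity of $H^{1,E}$ below $x_*\wedge\gamma^{1,L}$ — are the standard and correct ingredients.

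One step of your Step three, however, does not hold as stated: the claim that $\e^{-r(s-t)}U^c(s,X^{t,x}_s)$ is a supermartingale \emph{dominating} $G^{1,E}$, hence $U^c\ge V^{1,E}$. The supermartingale property is fine (the compensator accrues $H^{1,E}1_{\{X\le c\}}\le 0$ because $c(\cdot)\le x_*\wedge\gamma^{1,L}$ in the admissible class), but the domination $U^c\ge G^{1,E}$ is only known on $\{x\le c(t)\}$, where your Step one gives equality; above the candidate curve there is no a priori comparison between $U^c$ and the payoff, and in fact the standard execution of the scheme never proves $U^c\ge V^{1,E}$ directly. The correct route, which your Step four already essentially contains, is: from $U^c\le V^{1,E}$, start below both curves, stop at the first passage above $b^{1,E}$, and subtract the two compensated identities to conclude $\EE\int \e^{-r(u-t)}H^{1,E}(X_u)1_{\{c(u)<X_u\le b^{1,E}(u)\}}du\ge 0$, contradicting $H^{1,E}<0$ there unless $b^{1,E}\le c$; then, supposing $c(t_0)>b^{1,E}(t_0)$, start at $x\in(b^{1,E}(t_0),c(t_0))$, stop at the first entry into $\{x\le b^{1,E}(t)\}$, and compare the martingale identity for $V^{1,E}$ (zero drift in $\cC^{1,E}$) with that for $U^c$ to get $0<V^{1,E}(t_0,x)-U^c(t_0,x)=\EE\int \e^{-r(u-t_0)}H^{1,E}(X_u)1_{\{X_u\le c(u)\}}du\le 0$, a contradiction, whence $c=b^{1,E}$ and $U^c=V^{1,E}$. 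If you replace the unjustified domination claim by this two-sided curve comparison, your proposal becomes a full and correct proof of the uniqueness assertion that the paper leaves implicit; your closing remark on checking the martingale/optional-sampling integrability near $a=0$ for the CIR and IGBM specifications is a legitimate technical point that the paper also passes over in silence.
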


\begin{proof} The representation \eqref{th-2-tc} follows from    \eqref{LTSF-1}. Specifically, we set 
  $s=T$, take   expectations on both sides, and apply the optional sampling theorem for $M$, rearrange terms and recall \eqref{L-tc} and  the terminal condition
$V^{1,E}(T,x)=G^{1,E}(x)$ for all $x\in \I$. The integral equation \eqref{th-1-tc} is obtained by inserting $x=b^{1,E}(t)$ into \eqref{th-2-tc} and using \eqref{IS-tc}.
\end{proof}

\begin{figure}[h]
\begin{center}
\includegraphics[scale=0.6]{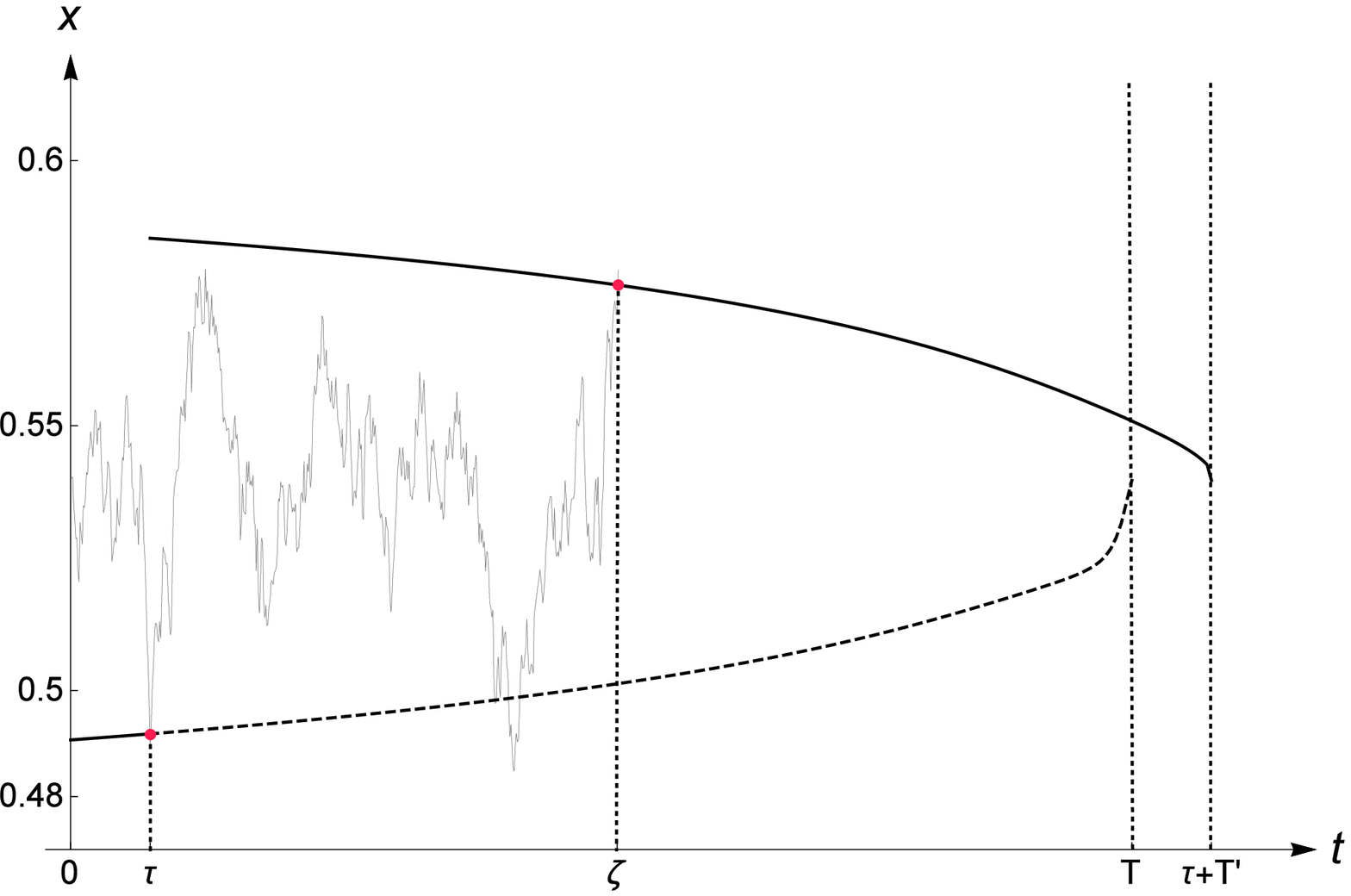}
\end{center}

{\par \leftskip=1.6cm \rightskip=1.6cm \small \ni \vs{-10pt}

\textbf{Figure 2.} A sample path of OU process, with the entry time $\tau=0.075$ when the path hits the optimal entry boundary $b^{1,E}$ (lower dashed), and the subsequent liquidation time $\zeta=0.516$ when the path reaches the optimal exit boundary
$b^{1,L}$ (upper solid).  The optimal entry boundary $b^{1,E}$ and exit boundary $b^{1,L}$
 corresponding to  the long-short strategy  are     computed from the  integral equations \eqref{th-1-tc}  and \eqref{th-1-2}  respectively.
The parameters are:  $T=T'=1$ year, $r=0.01$, $c=0.01$, $\theta=0.54$, $\mu=16$, $\sigma=0.16$, $b^{1,E}(T)=x_*=0.539656$, $\gamma^{1,L}=0.5545$.

\par} \vs{10pt}

\end{figure}

In Figure 2, we display a  sample path of the mean-reverting price process $X$  under the OU model, with parameters $\theta=0.54$, $\mu=16$, and $\sigma=0.16$. As we can see, the trader enters the market when $X$ reaches the lower boundary $b^{1,E}$ at a time $\tau\le T$. Subsequently, the trader waits for the process $X$ to hit the upper boundary $b^{1,L}$ to liquidate the (long) position  at time $\zeta$. The optimal entry and exit boundaries,  $b^{1,E}$ and  $b^{1,L}$, are     computed from the  integral equations \eqref{th-1-tc}  and \eqref{th-1-2}  respectively.  If the path $X$ does not  reach the lower boundary $b^{1,E}$ by time $T$, then the trader will simply leave the market without an open position.
  Also notice that the exit boundary is only relevant to the trader after market entry, and the trader has the time window of length $T'$ to close the position. If the price process $X$ fails to  touch the upper exit boundary  $b^{1,L}$ by the end of the time window, then the trader will be  forced to liquidate at the end.

To illustrate the value function's dependence on the deadlines to trader, we plot the map $T\mapsto V^{1,E}(0,\theta;T)$, which is the value function evaluated at $x= \theta$, as  function of the deadline $T$ for $T'=1$ (solid) and  $T'=0.5$ (dashed)  in Figure 3. When given more time to trade, whether to decide to enter or exit the market, the trader can achieve a higher expected value from trading $X$. Nevertheless, the fact that  $V^{1,E}(0,x;T)$  is  increasing concave in $T$  indicates a diminishing incremental value of a longer   horizon for the trader to decide to enter the market.

\begin{figure}[t]
\begin{center}
\includegraphics[scale=0.8]{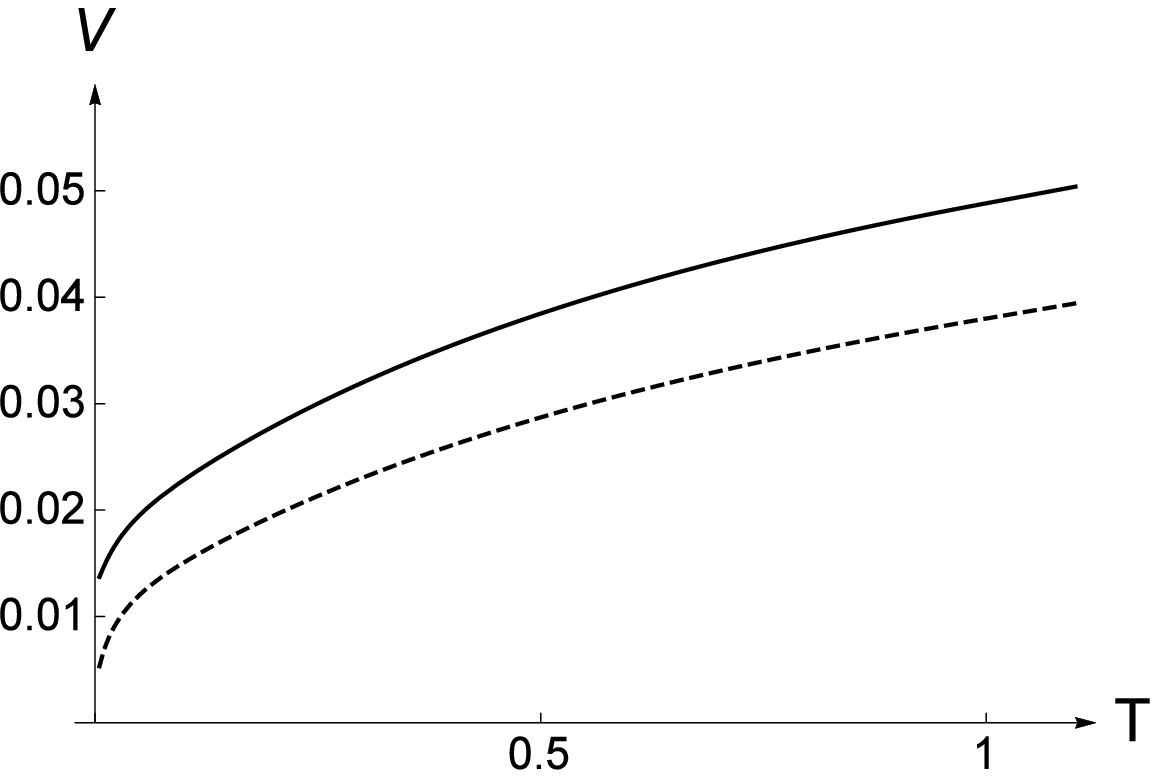}
\end{center}
{\par \leftskip=1.6cm \rightskip=1.6cm \small \ni  
\textbf{Figure 3.} The value functions $V^{1,E}(0,x;T)$  for the long-short strategy, evaluated at $x= \theta$ and plotted as  functions of the deadline $T$ for $T'=1$ year (solid) and  $T'=0.5$ year (dashed). The parameters are: $r=0.01, c=0.01, \theta=0.54, \mu=16, \zeta=0.16.$
\par} \end{figure}

Figure 4  shows that the  value function $x\mapsto V^{1,E}(0,x)$ (evaluated at time $t=0$) dominates the  payoff function $G^{1,E}(x)=(V^{1,L}(x)-x-c)^+$,  and they coincide for $x$ lower than $b^{1,E}(0)$. The value function is decreasing in $x$ but becomes flatter  for larger $x$. This suggests that being far above from the entry boundary at time $0$ does not significantly diminish the value of trading in $X$.  This can be explained by the mean-reverting property of $X$.The value function is also reduced when the transaction cost $c$ increases from $0.01$ to $0.02$.


\begin{figure}[h]
\begin{center}
\includegraphics[scale=0.7]{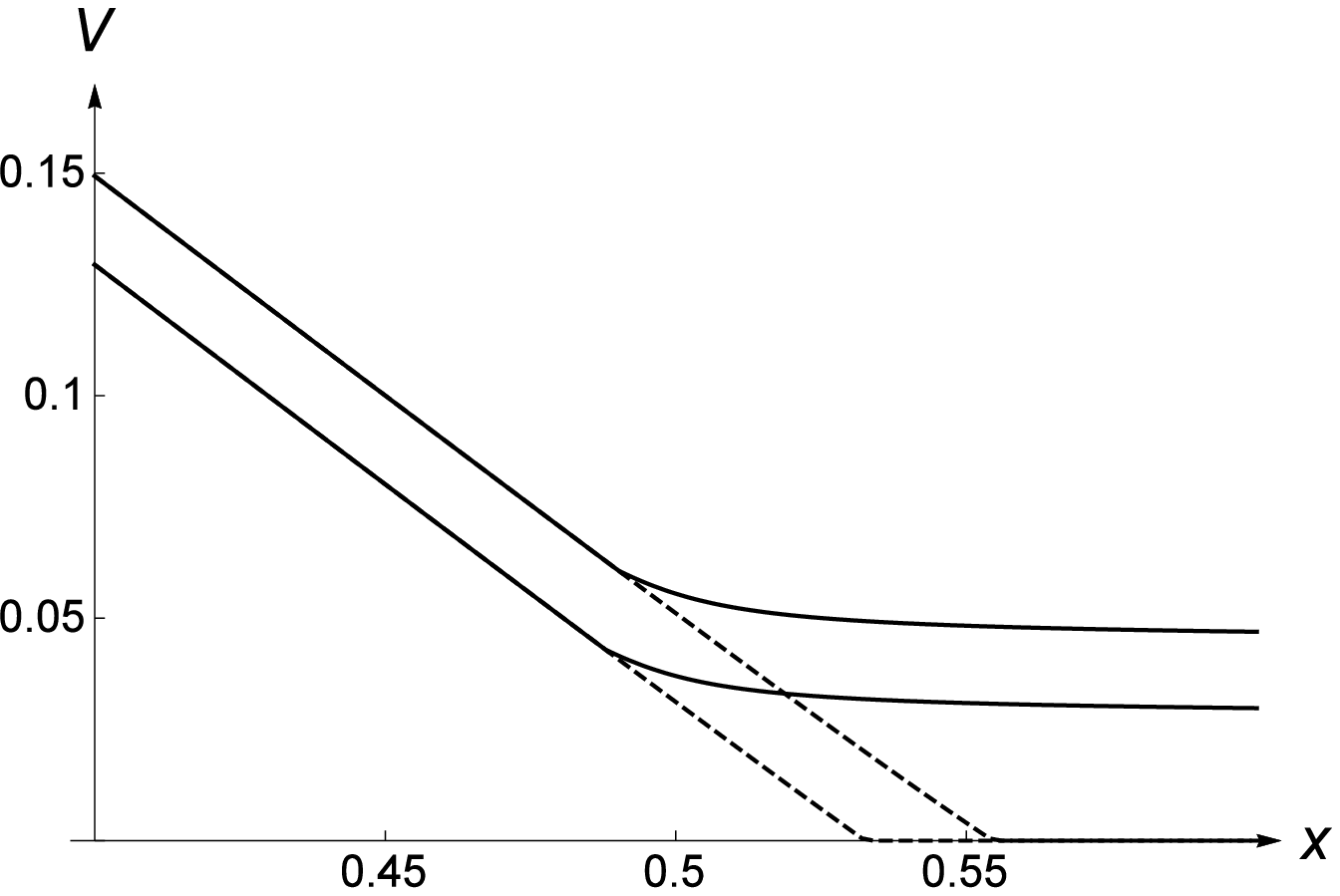}\label{fig4}
\end{center}

{\par \leftskip=1.6cm \rightskip=1.6cm \small \ni 
\textbf{Figure 4.} The value function $V^{1,E}(0,x)$ (solid) and payoff function $G^{1,E}(x)=(V^{1,L}(x)-x-c)^+$ (dashed) for the long-short strategy, plotted over $x$ for different transaction costs: $c = 0.01$ (upper), and $c=0.02$ (lower).   The parameters are: $T=T'=1$ year, $r=0.01, c=0.01, \theta=0.54, \mu=16, \zeta=0.16, \gamma^{1,L}=0.5545.$
\par}  
\end{figure}

 \section{Optimal short-long strategy}\label{sl}

The analysis of short-long strategy is completely symmetric to long-short one from the previous section. However we provide the main results for the sake of completeness and more importantly, we will use the solution to exit problem in the next section for chooser strategy.

We formulate the problem sequentially, assuming first that there is open short position in the spread which we want to liquidate optimally before $T'$
 \begin{equation} \label{sl-problem-1}  
V^{2,L}(x)=\inf \limits_{0\le\zeta\le T'}\EE \left[\e^{-r\zeta} (X^x_\zeta+c)\right],
 \end{equation}
for $x\in \I$ and the trader's optimal entry timing problem is given by
  \begin{equation} \label{sl-problem-2}  
V^{2,E}(t,x)=\sup \limits_{t\le\tau\le T}\EE \left[\e^{-r(\tau-t)}(X^{t,x}_\tau\m c\m V^{2,L}(X^{t,x}_\tau))^+\right],
 \end{equation}
for $(t,x)\in [0,T)\times \I$ as at time $\tau$ we receive $X^x_\tau$, pay $c$ and get the short position with the value $V^{2,L}(X^{t,x}_\tau)$. First we discuss the trivial cases.

\begin{prop} If $x_*\le a$, then it is optimal to wait until the end and exit from the position at $t=T'$. If $x_*\ge b$,  then it is optimal to liquidate the position at $t=0$. \end{prop}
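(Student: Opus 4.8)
The plan is to mirror the Itô-calculus argument that handled the long-short exit problem (cf. \eqref{H-1}--\eqref{xstar1}), now applied to the cost functional $\e^{-r\zeta}(X^x_\zeta\p c)$ in \eqref{sl-problem-1}, which is \emph{minimized} rather than maximized. First I would apply Itô's formula to $s\mapsto\e^{-rs}(X^x_s\p c)$ and take expectations; since $\sigma$ has linear growth and the horizon $[0,T']$ is bounded, the stochastic-integral term is a true martingale, so the optional sampling theorem gives, for every stopping time $\zeta\le T'$,
\begin{align*}
\EE\left[\e^{-r\zeta}(X^x_\zeta\p c)\right]=x\p c+\EE\left[\int_0^\zeta\e^{-rs}H^{2,L}(X^x_s)\,ds\right],
\end{align*}
where $H^{2,L}(x):=(\L_X F-rF)(x)$ with $F(x)=x\p c$, that is $H^{2,L}(x)=-(\mu\p r)x+\mu\theta-rc$. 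This function is affine and strictly decreasing, with unique root $x_*=(\mu\theta-rc)/(\mu\p r)$, exactly the quantity introduced just after \eqref{H-tc}.

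Next I would read off the sign of $H^{2,L}$ on $\I=(a,b)$. Being decreasing, $H^{2,L}(x)<0$ for $x>x_*$ and $H^{2,L}(x)>0$ for $x<x_*$. If $x_*\le a$, then every $x\in\I$ satisfies $x>a\ge x_*$, so $H^{2,L}<0$ throughout $\I$; the path-integral $\int_0^\zeta\e^{-rs}H^{2,L}(X^x_s)\,ds$ is then pathwise decreasing in $\zeta$, and the cost functional is minimized by taking $\zeta$ as large as allowed, namely the deterministic choice $\zeta=T'$. If instead $x_*\ge b$, then every $x\in\I$ satisfies $x<b\le x_*$, so $H^{2,L}>0$ throughout $\I$; the path-integral is now pathwise increasing in $\zeta$, and the minimum is attained at $\zeta=0$, i.e. immediate liquidation. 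This establishes both assertions.

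The argument is essentially routine once $H^{2,L}$ is identified, being the exact mirror image of Proposition 3.1 (with $x^*$ replaced by $x_*$ and the optimization direction reversed, which is why the same sign of the drift now produces the \emph{opposite} action). The only point needing care is the martingale justification for the optional sampling step, i.e. that $\int_0^{\cdot}\e^{-rs}\sigma(X^x_s)\,dB_s$ is a genuine martingale and not merely a local one; this follows from the linear-growth hypothesis on $\sigma$ together with the finite horizon, which yield the required square-integrability. A secondary subtlety, absent in the long-short case, is that here one minimizes, so I must confirm that a constant-sign drift pushes the minimizer to an endpoint of $[0,T']$ — which is immediate from the monotonicity of the path-integral in $\zeta$ noted above.
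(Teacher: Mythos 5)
Your proof is correct and takes essentially the same route as the paper's (implicit) argument: the It\^o/optional-sampling decomposition of $\e^{-r\zeta}(X^x_\zeta\p c)$ with drift term $H^{2,L}(x)=-(\mu\p r)x+\mu\theta-rc$ as in \eqref{H2Lx}, followed by a sign analysis of $H^{2,L}$ on $\I$, exactly mirroring Proposition 3.1 with the optimization direction reversed. Your observation that the pathwise monotonicity of the integral in its upper limit handles arbitrary stopping times (not just deterministic ones), together with the martingale justification from linear growth and the finite horizon, covers the only points requiring care.
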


Now under assumption $x_*\in\I$, we provide the
solution to the problem \eqref{sl-problem-1}.

\begin{theorem}\label{th:3}
The optimal stopping time  for \eqref{sl-problem-1} is given   by
\begin{align}  
\zeta_*^{2,L}=\inf\ \{\, 0\leq s\leq T': X^{t,x}_{s}\le b^{2,L}(s) \, \}.
 \end{align}
 The function   $b^{2,L}:[0,T]\rightarrow \R$ is the optimal exit boundary corresponding to  \eqref{sl-problem-1}, and it can be characterized as the unique solution to a nonlinear integral equation of Volterra-type
\begin{align}  
b^{2,L}(t)\p c=\e^{-r(T'-t)}m(T'\m t,b^{2,L}(t)) +\int_t^{T'} \cK^{2,L}(t,u,b^{2,L}(t),b^{2,L}(u))du,
\end{align}
for $t\in[0,T']$ in the class of continuous increasing functions $t\mapsto b^{2,L}(t)$ with
$b^{2,L}(T')=x_*$ and where the function
\begin{align}  
\cK^{2,L}(t,u,x,z):=-\e^{-r(u-t)}\EE \big[H^{2,L}(X^{t,x}_u) 1_{\{X^{t,x}_u \le z\}}\big],
 \end{align}
 for $u\ge t>0$ and $x,z\in\I$ with
 \begin{align} 
 H^{2,L}(x):=-(\mu+r)x+\mu\theta-rc\,, \label{H2Lx}
  \end{align}
for $x\in\I$. The value function $V^{2,L}$ in \eqref{sl-problem-1} admits the  representation
\begin{align} 
V^{2,L}(t,x;T')=\e^{-r(T'-t)}m(T'\m t,x) +\int_t^{T'}\cK^{2,L}(t,u,x,b^{2,L}(u))du\,,
\end{align}
for $t\in[0,T']$ and $x\in \I$.
\end{theorem}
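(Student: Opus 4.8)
The plan is to mirror the proof of Theorem~\ref{th:1}, exploiting the fact that the short-long exit problem is the reflection of the long-short one. I would first introduce the time-dependent value $V^{2,L}(t,x;T')=\inf_{t\le\zeta\le T'}\EE[\e^{-r(\zeta-t)}(X^{t,x}_\zeta+c)]$, so that $V^{2,L}(x)=V^{2,L}(0,x;T')$, and then set $Y:=-X$. From \eqref{OU}, $Y$ is a mean-reverting diffusion with affine drift (long-run mean $-\theta$, same speed $\mu$) on the reflected interval $(-b,-a)$, and the standing assumption $x_*\in\I$ is exactly the hypothesis $-x_*\in(-b,-a)$ needed for the $Y$-problem. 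Since $X^{t,x}_\zeta+c=-\big((-X^{t,x}_\zeta)-c\big)$, one obtains $V^{2,L}(t,x;T')=-V^{1,L,Y}(t,-x;T')$, where $V^{1,L,Y}$ is the long-short exit value for $Y$. As Theorem~\ref{th:1} was established for a general volatility function, this single identity transfers every conclusion: the optimal rule, the monotonicity and terminal value $b^{2,L}(T')=-(-x_*)=x_*$ of the boundary (decreasing for $Y$ becomes increasing for $X$), the integral characterization, and, crucially, its uniqueness.

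For a self-contained derivation I would argue directly, in parallel to Section~\ref{ls-exit}. The first step is an It\^o--Dynkin expansion: applying It\^o's formula to $\e^{-r(s-t)}(X^{t,x}_s+c)$ and the optional sampling theorem gives
\begin{equation*}
\EE[\e^{-r(\zeta-t)}(X^{t,x}_\zeta+c)]=x+c+\EE\Big[\int_t^\zeta \e^{-r(s-t)}H^{2,L}(X^{t,x}_s)\,ds\Big],
\end{equation*}
with $H^{2,L}=\L_X(x+c)-r(x+c)$ as in \eqref{H2Lx}. Because $H^{2,L}$ is affine and strictly decreasing, it is negative precisely on $\{x>x_*\}$; since we minimize, stopping is strictly suboptimal wherever $H^{2,L}<0$, so the continuation set is an upper set and the stopping set is down-connected, yielding a boundary $b^{2,L}$ and the stopping time $\zeta^{2,L}_*=\inf\{s:X^{t,x}_s\le b^{2,L}(s)\}$. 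Right-connectedness in time (from time-homogeneity) together with down-connectedness in space forces $b^{2,L}$ to be increasing, and the lack of remaining time near $T'$ gives $b^{2,L}(T'-)=x_*$. Standard arguments (as in Theorem~1 of \cite{KitaLeung}) then supply the continuity and convexity of $V^{2,L}$ and the smooth-fit condition $V^{2,L}_x(t,b^{2,L}(t))=1$.

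The representation formulas follow from the local time-space formula of \cite{Peskir2005a} applied to $\e^{-r(s-t)}V^{2,L}(s,X^{t,x}_s)$: using the free-boundary PDE in the continuation region, the definition of $H^{2,L}$, and smooth fit, the local-time term vanishes and one is left with a martingale plus a time integral of $H^{2,L}$ restricted to the stopping set $\{X^{t,x}_u\le b^{2,L}(u)\}$, which is exactly $\cK^{2,L}$. Taking $s=T'$, applying optional sampling to annihilate the martingale, and using the terminal condition $V^{2,L}(T',x;T')=x+c$ together with $\EE[X^{t,x}_{T'}]=m(T'\m t,x)$ from \eqref{m} produces the stated representation for $V^{2,L}(t,x;T')$; substituting $x=b^{2,L}(t)$ and invoking value-matching $V^{2,L}(t,b^{2,L}(t);T')=b^{2,L}(t)+c$ yields the Volterra equation for the boundary.

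The hard part in the direct route is uniqueness: one must show the nonlinear Volterra equation has no solution other than $b^{2,L}$ within the class of continuous increasing functions terminating at $x_*$. I would run the standard four-step Peskir--Shiryaev verification used in Theorem~1 of \cite{KitaLeung}: (i) any admissible candidate $c(\cdot)$ defines, through the same representation, a function $U^c$ matching the gain on its candidate boundary; (ii) $U^c$ equals the gain function below the boundary and solves the PDE above it; (iii) the local time-space formula applied to $U^c$ forces $U^c\le V^{2,L}$ and places the candidate boundary on one side of the true one; (iv) a reversed comparison forces equality of the two boundaries. Via the reflection identity of the first paragraph, however, this entire step is inherited from Theorem~\ref{th:1} and need not be repeated, which is why I would present the reflection argument as the primary route.
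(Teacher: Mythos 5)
Your proposal is correct and matches the paper's own route: the paper proves Theorem~\ref{th:3} simply by declaring the short-long problem ``completely symmetric'' to the long-short one, and your reflection $Y=-X$ (with $\sigma_Y(y)=\sigma(-y)$, long-run mean $-\theta$, state space $(-b,-a)$, and $x_*\in\I \Leftrightarrow -x_*\in(-b,-a)$) is exactly that symmetry made precise, correctly transferring the boundary via $b^{2,L}=-b^{1,L,Y}$, the terminal value $x_*$, the kernel identity $\cK^{1,L,Y}(t,u,-x,-z)=-\cK^{2,L}(t,u,x,z)$, and uniqueness, since Theorem~\ref{th:1} was stated for general volatility functions. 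Your secondary direct derivation likewise parallels Section~\ref{ls-exit} (It\^o--Dynkin expansion with $H^{2,L}$, connectedness of the stopping set, local time-space formula, and the Peskir-style uniqueness verification), so no gap remains.
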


Let us define now the threshold $\gamma^{2,L}$ as the unique root of the equation $x-c-V^{2,L}(x)=0$ so that we can rewrite the payoff
of the problem \eqref{sl-problem-2} as $G^{2,E}(x)=(x-c-V^{2,L}(x))1_{\{x>\gamma^{2,L}\}}$. As in the previous section, in order to exclude degenerate cases we assume 
that $\gamma^{2,L}\vee x^*\in \I$. We then have the following result for the entry problem.

\begin{theorem}\label{th:4}
The optimal stopping time  for \eqref{sl-problem-2} is given   by
\begin{align}  
\tau_*^{2,E}=\inf\ \{\, t\leq s\leq T: X^{t,x}_{s}\ge b^{2,E}(s) \, \}.
 \end{align}
The optimal entry boundary $b^{2,E}:[0,T]\rightarrow \R$ can be characterized as the unique solution to the recursive integral equation
\begin{align}\label{sl-th-1-tc}  
b^{2,E}(t)\m c \m V^{2,L}(b^{2,E}(t))=&e^{-r(T-t)}\EE [G^{2,E}(X^{t,b^{2,E}(t)}_{T})]\\
&+\int_t^{T} \cK^{2,E}(t,u,b^{2,E}(t),b^{2,E}(u))du,\notag
\end{align}
for $t\in[0,T]$ in the class of continuous decreasing functions with $b^{2,E}(T-)=\gamma^{2,L}\vee x^*$, where
 the   function
\begin{align}  
\cK^{2,E}(t,u,x,z):=-\e^{-r(u-t)}\EE \big[H^{2,E}(X^{t,x}_u) 1_{\{X^{t,x}_u \ge z\}}\big],
 \end{align}
 for $u\ge t>0$ and $x,z\in\I$ with $H^{2,E}(x):=-(\mu+r)x+\mu\theta+rc=H^{1,L}(x)$.
 The value function $V^{2,E}$ can be represented as
\begin{align}\label{sl-th-2-tc}  
V^{2,E}(t,x)=e^{-r(T-t)}\EE [G^{2,E}(X^{t,x}_{T})]+ \int_t^{T}\cK^{2,E}(t,u,x,b^{2,E}(u))du,
\end{align}
for $t\in[0,T]$ and $x\in \I$.
\end{theorem}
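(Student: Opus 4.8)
The plan is to mirror the proof of Theorem~\ref{th:2}, carrying over every step of the long-short entry analysis with the sign changes forced by the fact that the trader now opens with a short position. First I would record the structural description of \eqref{sl-problem-2}. Since the payoff $G^{2,E}(x)=(x\m c\m V^{2,L}(x))1_{\{x>\gamma^{2,L}\}}$ and the diffusion $X$ are time-homogeneous, the entry region is right-connected; a reflection of the down-connectedness argument used for $\cD^{1,E}$ then shows it is \emph{up}-connected. Concretely, taking $x>y>\gamma^{2,L}\vee x^*$ with $(t,y)$ in the entry region, a path started at $(t,x)$ cannot reach level $\gamma^{2,L}$ before entering (it must first cross $y$, where entry is already optimal by right-connectedness), so the local-time term at $\gamma^{2,L}$ vanishes while $H^{2,E}<0$ throughout, forcing immediate entry at $(t,x)$. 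This produces a single decreasing entry boundary $b^{2,E}$ with $b^{2,E}(t)>\gamma^{2,L}\vee x^*$ on $[0,T)$ and $b^{2,E}(T-)=\gamma^{2,L}\vee x^*$, together with the free-boundary problem whose value-matching and smooth-fit conditions read $V^{2,E}(t,b^{2,E}(t))=b^{2,E}(t)\m c\m V^{2,L}(b^{2,E}(t))$ and $V^{2,E}_x(t,b^{2,E}(t))=1\m V^{2,L}_x(t,b^{2,E}(t))$.

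With the boundary in hand, the analytic core is a single application of the local time-space formula of \cite{Peskir2005a} to $\e^{-r(s-t)}V^{2,E}(s,X^{t,x}_s)$. Using the governing PDE in the continuation region, the identity $H^{2,E}=\L_X G^{2,E}\m rG^{2,E}$ on the (up-connected) entry side, and the smooth-fit condition at $b^{2,E}$ to annihilate the local-time contribution, I obtain the exact analog of \eqref{LTSF-1},
\begin{align*}
\e^{-r(s-t)}V^{2,E}(s,X^{t,x}_s)=V^{2,E}(t,x)+\t M_s+\int_t^{s}\e^{-r(u-t)}H^{2,E}(X^{t,x}_u)\,I\big(X^{t,x}_u\ge b^{2,E}(u)\big)\,du,
\end{align*}
with $\t M$ a martingale. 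Setting $s=T$, taking expectations, applying optional sampling to $\t M$, and invoking the terminal condition $V^{2,E}(T,x)=G^{2,E}(x)$ together with the definition of $\cK^{2,E}$ yields the representation \eqref{sl-th-2-tc}. Substituting $x=b^{2,E}(t)$ and using value-matching then gives the Volterra equation \eqref{sl-th-1-tc}, and the description of $\tau^{2,E}_*$ follows from up-connectedness of the entry region.

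The step I expect to be the main obstacle is uniqueness: that $b^{2,E}$ is the \emph{only} continuous decreasing solution of \eqref{sl-th-1-tc} with $b^{2,E}(T-)=\gamma^{2,L}\vee x^*$. I would handle it by the standard four-step verification scheme (as for the analogous characterizations in \cite{Peskir2005a} and \cite{KitaLeung}). Given any admissible solution $c(\cdot)$, define $U^{c}(t,x)$ by the right-hand side of \eqref{sl-th-2-tc} with $b^{2,E}$ replaced by $c$; one shows in turn that $U^{c}(t,x)=G^{2,E}(x)$ for $x\ge c(t)$, that $U^{c}\ge G^{2,E}$ everywhere, that $U^{c}$ solves the same free-boundary problem and hence equals $V^{2,E}$, and finally, applying the local time-space formula once more to $U^{c}$ along both $c$ and $b^{2,E}$, that $c\equiv b^{2,E}$ by a local-time contradiction. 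The only genuinely new bookkeeping relative to Section~\ref{ls} is the reversal of the inequalities (entry now occurs from above, so $H^{2,E}<0$ on the entry side and the relevant comparisons flip); once these signs are tracked, the remainder is a symmetric transcription of the long-short argument.
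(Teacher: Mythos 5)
Your proposal is correct and takes essentially the same route as the paper: the paper offers no separate proof of Theorem \ref{th:4}, stating that the short-long case is completely symmetric to Section \ref{ls}, and your sign-flipped transcription (right- and up-connectedness of the entry region, the mirrored free-boundary problem with value matching and smooth fit at $b^{2,E}$, one application of the local time-space formula of \cite{Peskir2005a}, then $s=T$, optional sampling, and the terminal condition to get \eqref{sl-th-2-tc} and \eqref{sl-th-1-tc}) is exactly the intended symmetric argument, matching the proof of Theorem \ref{th:2}. Your four-step uniqueness sketch goes slightly beyond what the paper writes out, but it is the standard verification scheme the paper implicitly invokes, so there is no divergence in approach.
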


\section{Chooser strategy}\label{o}

Now we aggregate the long-short and short-long strategies into one strategy whereby the trader can choose upon entry whether to go long or short in $X$ at or before the deadline $T$. As such, the strategy is called the chooser strategy.  In other words, the trader is not pre-committed to the type of trade prior to market entry, and thus, has a   chooser option in addition to the timing options that are embedded in her trading problem. Clearly, this added flexibility should increase the   expected profit from trading in $X$. After entry, the trader has a separate deadline  $T'$ to liquidate the position.

As in Sections 3 and 4, we tackle the trading problem sequentially.
 Once the trader enters into the position, she/he solves one of the optimal liquidation problems and both of them were already discussed in Theorems \ref{th:1} and \ref{th:3}.
Therefore it remains to analyze the optimal entry problem, which  is  given by
 \begin{equation} \label{0-problem-ch}  
V^{0,E}(t,x)=\sup \limits_{t\le\tau\le T}\EE \left[\e^{-r(\tau-t)}G^{0,E}(X^{t,x}_\tau)\right],
 \end{equation}
for $t\in[0,T)$ and $x\in \I$, where the payoff function $G^{0,E}$ reads
 \begin{equation} \label{0-payoff-ch}  
G^{0,E}(x)=\max\left\{V^{1,L}(x)\m x\m c,x\m c\m V^{2,L}(x),0\right\},
 \end{equation}
  for $x\in \I$. The payoff function $G^{0,E}$ shows that at entry time the trader maximizes his value and chooses the best option, i.e. whether to go long or short the spread, or not to enter at all. The latter may happen to be optimal due to the presence of trading costs. As in the previous sections, the gain function is time-independent. We recall that in order exclude the trivial situations for both exit problems we assume that $a<x_*\le x^*<b$.

By the nature of the value functions, we know that $V^{1,L}(t,x) =x-c$ for $x\ge b^{1,L}(0)$ and $V^{2,L}(t,x)=x+c$ for $x\le b^{2,L}(0)$ for any $t\in[0,T)$.
 In addition,  since $V^{1,L}$ and $V^{2,L}$ are convex and concave, respectively, we have   $V^{1,L}_x \le 1$ and $V^{2,L}_x \le1$. Hence,
 the function $V^{1,L}(x)\m x\m c$ is decreasing for $x< b^{1,L}(0)$, and $x\m c\m V^{2,L}(x)$ is increasing $x> b^{2,L}(0)$. In turn, we can  conclude that there exists a constant threshold $m\in (b^{2,L}(0),b^{1,L}(0))$  such that
 \begin{equation} \label{0-payoff-2-ch}  
G^{0,E}(x)=\left((V^{1,L}(x)\m x\m c ) 1_{\{x\le m\}} + (x\m c\m V^{2,L}(x))1_{\{x> m\}}\right)^+\,,
 \end{equation}
for $x\in \I$. The important matter is the sign $V^{1,L}(m)- m- c=m- c- V^{2,L}(m)$. Using the definitions of $\gamma^{1,L}$ and $\gamma^{2,L}$,
the function $G^{0,E}$ can be then rewritten as
  \begin{equation} \label{0-payoff-2-ch}  
G^{0,E}(x)=(V^{1,L}(x)\m x\m c ) 1_{\{x\le m\wedge \gamma^{1,L}\}} + (x\m c\m V^{2,L}(x))1_{\{x> m\vee \gamma^{2,L}\}},
 \end{equation}
for $x\in \I$. Then there are two possibilities: (i) $\gamma^{1,L}<m<\gamma^{2,L}$ or (ii) $\gamma^{2,L}<m<\gamma^{1,L}$. Also, we note that the function $G^{0,E}$ is convex. 
As usual in this paper we assume that $m,\gamma^{1,L},\gamma^{2,L}\in \I$ to avoid degenerate cases.

As usual, define the continuation and entry regions, respectively, by
\begin{align} \label{C-ch}  
&\cC^{0,E}= \{\, (t,x)\in[0,T)\! \times\! \I:V^{0,E}(t,x)>G^{0,E}(x)\, \} ,\\
 \label{D-ch}&\cD^{0,E}= \{\, (t,x)\in[0,T)\! \times\! \I:V^{0,E}(t,x)=G^{0,E}(x)\, \}.
 \end{align}
Then the optimal entry time in \eqref{0-problem-ch} is given by
\begin{align} \label{OST-ch}  
\tau^{0,E}=\inf\ \{\ t\leq s\leq T:(s,X^{t,x}_{s})\in\cD^{0,E}\ \}.
 \end{align}

 We now provide the main result of this section.
 \begin{theorem}\label{th:ch}
There exists  a pair of boundaries $(b^{0,E},\wt{b}^{0,E})$ such that
\begin{align}   \label{tau-0E}
&\tau_{b}=\inf\ \{\ t\leq s\leq T: X^{t,x}_{s}\le b^{0,E}(s)\;\text{or}\; X^{t,x}_{s}\ge \wt{b}^{0,E}(s)\ \}
 \end{align}
is optimal in \eqref{0-problem-ch}. This pair can be characterized as the unique solution to a system of coupled integral equations
\begin{align}\label{th-1-ch}  
V^{1,L}(t,b^{0,E}(t))\m b^{0,E}(t)\m c=&e^{-r(T-t)}\EE [G^{0,E}(X^{t,b^{0,E}(t)}_{T})]\\
&+\int_t^{T} \cK^{0,E}(t,u,b^{0,E}(t),b^{0,E}(u),\wt{b}^{0,E}(u))du,\notag\\
\label{th-2-ch} \wt{b}^{0,E}(t)\m c\m V^{2,L}(t,\wt{b}^{0,E}(t))=&e^{-r(T-t)}\EE [G^{0,E}(X^{t,\wt{b}^{0,E}(t)}_{T})]\\
&+\int_t^{T} \cK^{0,E}(t,u,\wt{b}^{0,E}(t),b^{0,E}(u),\wt{b}^{0,E}(u))du,\notag
\end{align}
for $t\in[0,T]$ in the class of continuous increasing functions $b^{0,E}$ with $b^{0,E}(T)=m\wedge\gamma^{1,L}\wedge x^*$
and continuous decreasing functions $\wt{b}^{0,E}$ with $\wt{b}^{0,E}(T)=m\vee\gamma^{2,L}\vee x^*$.

 The value function $V^{0,E}$ can be represented as
\begin{align}\label{th-3-ch}  
V^{0,E}(t,x)=e^{-r(T-t)}\EE [G^{0,E}(X^{t,x}_{T})]+ \int_t^{T}\cK^{0,E}(t,u,x,b^{0,E}(u),\wt{b}^{0,E}(u))du,
\end{align}
for $t\in[0,T]$ and $x\in \I$, where the  function $\cK^{0,E}$ defined as
\begin{align}
\label{L-ch}
&\cK^{0,E}(t,u,x,z,\wt{z}):=-\e^{-r(u-t)}\EE \big[H^{0,E}( X^{t,x}_u) 1_{\{X^{t,x}_u \le z\;\text{or}\;X^{t,x}_u \ge \wt{z}\}}\big]\\
&=-\e^{-r(u-t)}\left(\int_{-\infty\vee a}^z H^{0,E}(\wt{x})\,p(\wt{x};u,x,t)\,d\wt{x}+\int_{\wt{z}}^{\infty\wedge b} H^{0,E}(\wt{x})\,p(\wt{x};u,x,t)\,d\wt{x}\right),\notag
 \end{align}
 for $u\ge t\ge 0$ and $x,z,\wt{z}\in\I$.
\end{theorem}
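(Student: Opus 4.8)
The plan is to follow the same program as in the proof of Theorem \ref{th:2}, adapted now to a two-sided free-boundary problem. First I would pin down the geometry of the stopping set. Combining the down-connectedness argument used for the long-short entry region with the up-connectedness argument used for the short-long entry region, and using that $G^{0,E}$ is convex with a flat (zero) middle piece, I would show that for each $t$ the continuation region is a single open interval, so that $\cD^{0,E}$ splits into a lower piece $\{x\le b^{0,E}(t)\}$ (enter long) and an upper piece $\{x\ge\wt{b}^{0,E}(t)\}$ (enter short). A direct computation using \eqref{PDE}, the analogous PDE for $V^{2,L}$, and $\L_X x=\mu(\theta\m x)$ gives
\[
H^{0,E}(x)=\big((\mu\p r)x\m\mu\theta\p rc\big)\,1_{\{x\le m\wedge\gamma^{1,L}\}}+\big(\mu\theta\p rc\m(\mu\p r)x\big)\,1_{\{x> m\vee\gamma^{2,L}\}},
\]
which is strictly negative for $x$ sufficiently small and for $x$ sufficiently large; together with the finite deadline this forces both entry pieces to be nonempty near $T$ and yields the terminal values $b^{0,E}(T)$ and $\wt{b}^{0,E}(T)$ stated in the theorem. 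Monotonicity of each boundary follows from time-homogeneity (right-connectedness in time) as before, continuity follows from the standard arguments behind \eqref{Prop-5-tc}, and at each boundary the value function enjoys value-matching and smooth fit, i.e. the two-sided analogue of \eqref{IS-tc}--\eqref{SF-tc}.

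Next I would derive the representation. Applying the local time-space formula of \cite{Peskir2005a} to $\e^{-r(s-t)}V^{0,E}(s,X^{t,x}_s)$ produces, exactly as in \eqref{LTSF-1}, a martingale term, a Lebesgue integral of $(V^{0,E}_t\p\L_X V^{0,E}\m rV^{0,E})$ that reduces to $H^{0,E}$ on the stopping set, and now two local-time integrals, one on $b^{0,E}$ and one on $\wt{b}^{0,E}$. Both local-time terms vanish by the smooth-fit conditions at the two boundaries. Setting $s=T$, taking $\EE[\cdot]$, applying the optional sampling theorem to the martingale, and using the terminal identity $V^{0,E}(T,x)=G^{0,E}(x)$ together with the definition \eqref{L-ch} of $\cK^{0,E}$ yields the representation \eqref{th-3-ch}. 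Substituting $x=b^{0,E}(t)$ and then $x=\wt{b}^{0,E}(t)$ and invoking the respective value-matching conditions turns \eqref{th-3-ch} into the coupled system \eqref{th-1-ch}--\eqref{th-2-ch}.

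The hard part is uniqueness, which is genuinely more delicate than in Theorem \ref{th:2} because the two boundaries are coupled through $\cK^{0,E}(t,u,\cdot\,,b^{0,E}(u),\wt{b}^{0,E}(u))$. I would take an arbitrary pair $(c,\wt{c})$ of continuous monotone functions (increasing and decreasing, respectively) with the prescribed terminal values that solves \eqref{th-1-ch}--\eqref{th-2-ch}, and define the candidate $U^{c,\wt{c}}(t,x)$ by the right-hand side of \eqref{th-3-ch} with $(b^{0,E},\wt{b}^{0,E})$ replaced by $(c,\wt{c})$. Following the scheme of Peskir--Shiryaev, I would show: (a) $U^{c,\wt{c}}=G^{0,E}$ on $\{x\le c(t)\}\cup\{x\ge\wt{c}(t)\}$, which holds by construction since the integral equations make the defining identity hold along both curves; (b) $U^{c,\wt{c}}$ solves the killed PDE strictly between $c$ and $\wt{c}$; (c) $U^{c,\wt{c}}\ge G^{0,E}$ everywhere and $U^{c,\wt{c}}\le V^{0,E}$, obtained by applying the local time-space formula to $U^{c,\wt{c}}$ and using the sign of $H^{0,E}$ on each piece; and (d) no point of the true continuation region can lie in the candidate stopping region or vice versa, forcing $c\equiv b^{0,E}$ and $\wt{c}\equiv\wt{b}^{0,E}$. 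The coupling means that steps (c)--(d) must be run on both boundaries simultaneously: the comparison argument that would rule out $c<b^{0,E}$ on some interval has to be carried out while controlling the position of $\wt{c}$ relative to $\wt{b}^{0,E}$, and this simultaneous control is where the real work lies. Once $c\equiv b^{0,E}$ and $\wt{c}\equiv\wt{b}^{0,E}$, \eqref{th-3-ch} gives $U^{c,\wt{c}}=V^{0,E}$, completing the proof.
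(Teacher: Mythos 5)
Your proposal follows essentially the same route as the paper's proof: right-, down- and up-connectedness of $\cD^{0,E}$ to obtain the two monotone boundaries, the local time-space formula applied to $\e^{-r(s-t)}V^{0,E}(s,X^{t,x}_s)$ with both local-time integrals annihilated by smooth fit at $b^{0,E}$ and $\wt{b}^{0,E}$, then $s=T$, optional sampling and $V^{0,E}(T,\cdot)=G^{0,E}(\cdot)$ to get \eqref{th-3-ch}, and insertion of $x=b^{0,E}(t)$ and $x=\wt{b}^{0,E}(t)$ with value matching to get the coupled system \eqref{th-1-ch}--\eqref{th-2-ch}. Three points of comparison are worth recording. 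First, your formula for $H^{0,E}$ is actually the correct one: on $\{x>m\vee\gamma^{2,L}\}$ a direct computation from $G^{0,E}(x)=x-c-V^{2,L}(x)$ gives $\mu\theta+rc-(\mu+r)x=H^{2,E}(x)=H^{1,L}(x)$, consistent with Theorem \ref{th:4} and with the stated terminal value $\wt{b}^{0,E}(T)=m\vee\gamma^{2,L}\vee x^*$ (whose root is $x^*$), whereas the paper's compact display \eqref{H-ch} carries a sign slip on the $rc$ term in that region; note, however, that your own (correct) $H^{0,E}$ then forces the \emph{lower} terminal value to be $m\wedge\gamma^{1,L}\wedge x_*$ (root $x_*$, matching $b^{1,E}(T-)=\gamma^{1,L}\wedge x_*$ in Theorem \ref{th:2}), so your claim that the sign analysis "yields the terminal values stated in the theorem" inherits what appears to be a typo ($x^*$ for $x_*$) in the theorem itself. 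Second, the paper's proof in fact \emph{stops} at the derivation of the coupled system: uniqueness within the stated class is asserted in the statement but never argued there, so your Peskir--Shiryaev four-step scheme (a)--(d), run simultaneously on the coupled pair $(c,\wt{c})$, goes beyond the paper --- but, as you concede, the simultaneous comparison controlling both boundaries at once is left as a sketch, so on this point neither you nor the paper supplies a complete argument. Third, a small caveat on geometry: $G^{0,E}$ has a flat zero middle piece only in case (i) $\gamma^{1,L}<m<\gamma^{2,L}$; in case (ii) one has $G^{0,E}(m)=V^{1,L}(m)-m-c>0$ with a kink at $m$, though this is harmless since it is the connectedness arguments, not the flat piece, that deliver the interval structure of the continuation region.
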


\begin{figure}[t]
\begin{center}
\includegraphics[scale=0.7]{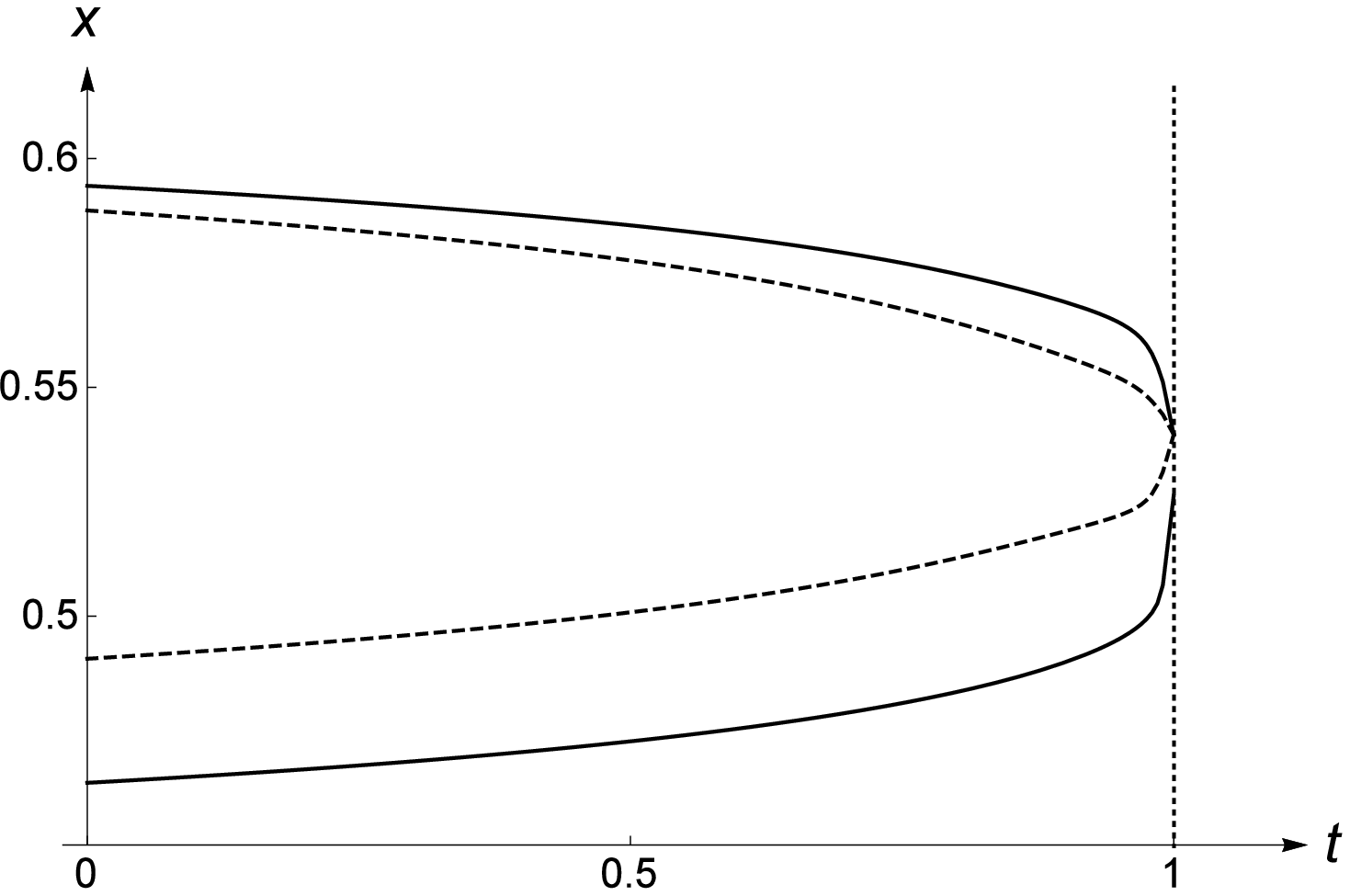}
\end{center}

{\par \leftskip=1.6cm \rightskip=1.6cm \small \ni 
\textbf{Figure 5.} The optimal entry  boundaries corresponding to the  long-short strategy (lower dashed) and short-long strategy (upper dashed) are enclosed by the optimal lower and upper entry boundaries $(b^{0,E},\wt{b}^{0,E})$ (solid)  associated with the chooser strategy. The boundary pair $(b^{0,E},\wt{b}^{0,E})$ is  numerically solved  from the  system of integral equations \eqref{th-1-ch}-\eqref{th-2-ch} under the OU model. The parameters are:  $T=T'=1$ year, $r=0.01, c=0.01, \theta=0.54, \mu=16, \sigma=0.16.$  A time discretization with  $500$ steps for the interval $[0,T]$ is used.
\par} 
\end{figure}
\vs{10pt}

Before presenting the proof of Theorem \ref{th:ch}, let us illustrate and discuss the properties of  the optimal trading boundaries. We  numerically solve for the optimal  boundary pair $(b^{0,E},\wt{b}^{0,E})$  from the  system of integral equations \eqref{th-1-ch}-\eqref{th-2-ch} under the OU model.   In Figure 5, we observe that the optimal entry  boundaries corresponding to the  long-short strategy   and short-long strategy  are enclosed by the optimal lower and upper entry boundary pair $(b^{0,E},\wt{b}^{0,E})$   associated with the chooser strategy.   In other words, with the option to select the first (long/short) position, it is optimal for the trader to wait longer for a better entry price. To see this, let us take some $t'\in[0,T)$ and $x'\le m\wedge \gamma^{1,L}$ so that $G^{0,E}(x')=G^{1,E}(x')$.  Since   $V^{0,E}(t',x')\ge V^{1,E}(t',x')$, if  $V^{1,E}(t',x')-G^{1,E}(t',x')>0$, i.e. it is not optimal to enter into the long position, then $V^{0,E}(t',x')-G^{0,E}(t',x')\ge V^{1,E}(t',x')-G^{1,E}(t',x')>0$, which means $(t',x')\in \cC^{0,E}$.
As a consequence, we conclude that $b^{0,E}(t)\le b^{1,E}(t)$ for any $t\in [0,T)$.  Using similar arguments, we can show that $\wt{b}^{0,E}(t)\ge b^{2,E}(t)$ for any $t\in[0,T)$.

\begin{proof} First, to obtain some properties of the entry region, we use Ito-Tanaka's formula to get 
\begin{align} \label{Ito-ch}  
 \EE \left[\e^{-r(\tau-t)}G^{0,E}(X^{t,x}_\tau)\right]=&\;G^{0,E}(x)+\EE \left[\int_t^\tau \e^{-r(s-t)}H^{0,E}(X^{t,x}_s)ds\right]\\
 &+\EE \left[\int_t^\tau \e^{-r(s-t)}(1-V^{1,L}_x(X^{t,x}_s))d\ell^{m\wedge\gamma^{1,L}}_s\right]\nonumber\\
 &+\EE \left[\int_t^\tau \e^{-r(s-t)}(1-V^{2,L}_x(X^{t,x}_s))d\ell^{m\vee\gamma^{2,L}}_s\right],\nonumber
 \end{align}
for $t\in[0,T)$, $x\in\I$, any stopping time $\tau$ of process $X$. Here,  the function $H^{0,E}$ is defined as  $H^{0,E}(x):=(\L_X G^{0,E} \m rG^{0,E})(x)$ for $x\in\I$ and equals
\begin{align} \label{H-ch}  
H^{0,E}(x)=\left((\mu\p r)x-\mu\theta+rc\right)(1_{\{x <m\wedge \gamma^{1,L}\}}-1_{\{x >m\vee \gamma^{2,L}\}}).
\end{align}

The function $H^{0,E}$ is negative when $x\in(-\infty,m\wedge\gamma^{1,L}\wedge x^*)\cup(m\vee \gamma^{2,L}\vee x^*,\infty)$. It is not optimal to enter into the position when $m\wedge \gamma^{1,L}\wedge x^*<X_s<m\vee \gamma^{2,L}\vee x^*$
as $H^{0,E}$ is positive there and the local time term is always non-negative. 
Also, near $T$ it is optimal to enter at once when $X_s<m\wedge\gamma^{1,L}\wedge x^*$ or $X_s>m\vee \gamma^{2,L}\vee x^*$ due to lack of time to compensate the negative $H^{0,E}$. This gives us the terminal condition of the exercise boundaries (see below) at $T$.
We also note that if $a=-\infty$ and $b=\infty$, the equation \eqref{Ito-ch} shows that the entry region is non-empty for all $t\in[0,T)$, as for large negative or positive $x$ the integrand $H^{0,E}$
is very negative and thus it is optimal to enter immediately due to the presence of the finite deadline $T$.
\vs{2pt}

Since the payoff function $G^{0,E}$ is time-homogenous, we have that the entry region $\cD^{0,E}$ is  right-connected.
Next, we show that  $\cD^{0,E}$ is both down- and up-connected. We prove only that it is down-connected, for up-connectedness the same arguments can be applied.
Let us take $t>0$ and $x<y<m\wedge\gamma^{1,L}\wedge x^*$ such that $(t,y)\in \cD^{0,E}$.
Then, by right-connectedness of the entry region, we have that $(s,y)\in \cD^{0,E}$ as well for any $s>t$. If we now run the process $(s,X_s)_{s\ge t}$ from $(t,x)$, we cannot hit the level $m\wedge\gamma^{1,L}\wedge x^*$ before entry (as $x<y$), thus the local time term in \eqref{Ito-ch} is 0 and integrand $H^{0,E}$ is negative before $\tau^{E}_*$.
Therefore it is optimal to enter at $(t,x)$ and we obtain down-connectedness of the entry region $\cD^{0,E}$.

Hence there exists a pair of optimal entry boundaries $(b^{0,E},\wt{b}^{0,E})$ on $[0,T)$ such that
$\tau_{b}$ defined in \eqref{tau-0E} is optimal in \eqref{0-problem-ch} and $-\infty<b^{0,E}(t)<m\wedge\gamma^{1,L}\wedge x^*<m\vee\gamma^{2,L}\vee x^*<\wt{b}^{0,E}(t)<\infty$ for $t\in[0,T)$.
From arguments above, we also have that $b^{0,E}(T-)=m\wedge\gamma^{1,L}\wedge x^*$ and
$\wt{b}^{0,E}(T-)=m\vee\gamma^{2,L}\vee x^*$.
Moreover, right-connectedness of $\cD^{0,E}$ implies that $b^{0,E}$ is increasing and $\wt{b}^{0,E}$ is decreasing on $[0,T)$. 
\vs{6pt}

 Standard arguments based on strong Markov property show that the value function $V^{0,E}$ and boundaries $(b^{0,E},\wt{b}^{0,E})$ solve the following free-boundary problem:
\begin{align} \label{PDE-ch}  
&V^{0,E}_t \p\L_X V^{0,E} \m rV^{0,E}=0 &\hs{-30pt}\text{in}\;  \cC^{0,E},\\
\label{IS-ch-1}&V^{0,E}(t,b^{0,E}(t))=V^{1,L}(t,b^{0,E}(t))\m b^{0,E}(t)\m c &\hs{-30pt}\text{for}\; t\in[0,T),\\
\label{IS-ch-2}&V^{0,E}(t,\wt{b}^{0,E}(t))=\wt{b}^{0,E}(t)\m c\m V^{2,L}(t,\wt{b}^{0,E}(t)) &\hs{-30pt}\text{for}\; t\in[0,T),\\
\label{SF-ch-1}&V^{0,E}_x(t,b^{0,E}(t))=V^{1,L}_x(t,b^{0,E}(t))\m 1&\hs{-30pt}\text{for}\; t\in[0,T), \\
\label{SF-ch-2}&V^{0,E}_x(t,\wt{b}^{0,E}(t))=1\m V^{2,L}_x(t,\wt{b}^{0,E}(t))&\hs{-30pt}\text{for}\; t\in[0,T), \\
\label{FBP1-ch}&V^{0,E}(t,x)>G^{0,E}(x) &\hs{-30pt}\text{in}\; \cC^{0,E},\\
\label{FBP2-ch}&V^{0,E}(t,x)=G^{0,E}(x) &\hs{-30pt}\text{in}\; \cD^{0,E},
\end{align}
where the continuation set $\cC^{0,E}$ and the entry region $\cD^{0,E}$ are given by
\begin{align} \label{CC-ch}  
&\cC^{0,E}= \{\, (t,x)\in[0,T)\! \times\! \I: b^{0,E}(t)<x<\wt{b}^{0,E}(t)\, \},\\ 
 \label{DD-ch}&\cD^{0,E}= \{\, (t,x)\in[0,T)\! \times\! \I:x\le b^{0,E}(t)\;\text{or}\;x\ge \wt{b}^{0,E}(t)\, \}.
 \end{align}
The following properties of $V^{0,E}$ and $(b^{0,E},\wt{b}^{0,E})$ also hold:
\begin{align} \label{Prop-1-ch}  
&V^{0,E}\;\text{is continuous on}\; [0,T]\times\I,\\
\label{Prop-3-ch}&x\mapsto V^{0,E}(t,x)\;\text{is convex on $\I$ for each $t\in[0,T]$},\\
\label{Prop-4-ch}&t\mapsto V^{0,E}(t,x)\;\text{is decreasing on $[0,T]$ for each $x\in\I$},\\
\label{Prop-5-ch}&b^{0,E}\;\text{and}\;\wt{b}^{0,E}\;\text{are continuous on}\;[0,T].
\end{align}

We now apply the local time-space formula (\cite{Peskir2005a}) to the process  $\e^{-r(s-t)}V^{0,E}(s,X^{t,x}_s)$
 along with \eqref{PDE-ch}, the definition of $H^{0,E}$, the smooth-fit properties \eqref{SF-ch-1} and \eqref{SF-ch-2} to get
 \begin{align} \label{LTSF-ch}   
\e^{-r(s-t)}&V^{0,E}(s,X^{t,x}_s)\\
=\;&V^{0,E}(t,x)+M_s\nonumber\\
 &+ \int_t^{s}\e^{-r(u-t)}\left(V^{0,E}_t \p \L_X V^{0,E}\m rV^{0,E}\right)(u,X^{t,x}_u)du\nonumber\\
 &+\frac{1}{2}\int_t^{s}\e^{-r(u-t)}\left(V^{0,E}_x (u,X^{t,x}_u +)-V^{0,E}_x (u,X^{t,x}_u -)\right)d\ell^{b^{0,E}}_u\nonumber\\
 &+\frac{1}{2}\int_t^{s}\e^{-r(u-t)}\left(V^{0,E}_x (u,X^{t,x}_u +)-V^{0,E}_x (u,X^{t,x}_u -)\right)d\ell^{\wt{b}^{0,E}}_u\nonumber\\
  =\;&V^{0,E}(t,x)+M_s \\&+\int_t^{s}\e^{-r(u-t)}H^{0,E}(X^{t,x}_u) 1_{\{X^{t,x}_u \le b^{0,E}(u)\;\text{or}\;X^{t,x}_u \ge \wt{b}^{0,E}(u)\}}du,\nonumber
  \end{align}
where    $M=(M_s)_{s\ge t}$ is the martingale part, $(\ell^{b^{0,E}}_s)_{s\ge t}$ and $(\ell^{\wt{b}^{0,E}}_s)_{s\ge t}$ are the local time processes of $X^x$ at the boundaries $b^{0,E}$ and
$\wt{b}^{0,E}$, respectively. Now letting $s=T$, taking the expectation $\EE$, using the optional sampling theorem, rearranging terms and noting that $V^{0,E}(T,\cdot)=G^{0,E}(\cdot)$,
we obtain \eqref{th-3-ch}. Then by inserting $x=b^{0,E}(t)$ and $x=\wt{b}^{0,E}(t)$ into \eqref{th-3-ch}, and recalling the continuous pasting properties \eqref{IS-ch-1}+\eqref{IS-ch-2}, we arrive at  the system of coupled integral equations \eqref{th-1-ch}-\eqref{th-2-ch}. \end{proof}

\newpage
  Figure 6 shows that  the  map $T\mapsto V^{0,E}(0,\theta;T)$ corresponding to  the chooser strategy, evaluated at $x = \theta$, is an increasing function of the deadline $T$. However, the slopes appears to be decreasing rapidly as $T$ increases, indicating a reduced benefit of a longer trading horizon. For every $T$,   $V^{0,E}(0,\theta;T)$   dominates the value function $V^{1,E}(0, \theta;T)$   for the long-short strategy.  This difference in value, which is very substantial in this example, can be viewed as the premium associated with the chooser option in $V^{0,E}(0,\theta;T)$.

   In Figure 7, we compare the value functions for the optimal  entry problems with and without the chooser option.  The value function $V^{0,E}(0,x)$ for the chooser strategy (solid) dominates the payoff function $G^{0,E}(x)$ (see \eqref{0-payoff-ch}) that is V-shaped and plotted in dotted line, and the two coincide for sufficiently large and small $x$. In comparison, $V^{1,E}(x)$ (dashed) for the long-short strategy only dominates the payoff function $G^{0,E}(x)$ for small $x$ on the left. This means that for large $x$, immediately exercising the chooser option and capturing the payoff $G^{0,E}(x)$ is better than optimally timing to enter the market with the long-short strategy.  Moreover, the value function $V^{0,E}(0,x)$ (chooser strategy) is higher than $V^{1,E}(x)$ (long-short) for all $x$, and they coincide for sufficiently small $x$ when immediate market entry (with a long position) is optimal for the chooser strategy.

\begin{figure}[h]
\begin{center}
\includegraphics[scale=0.8]{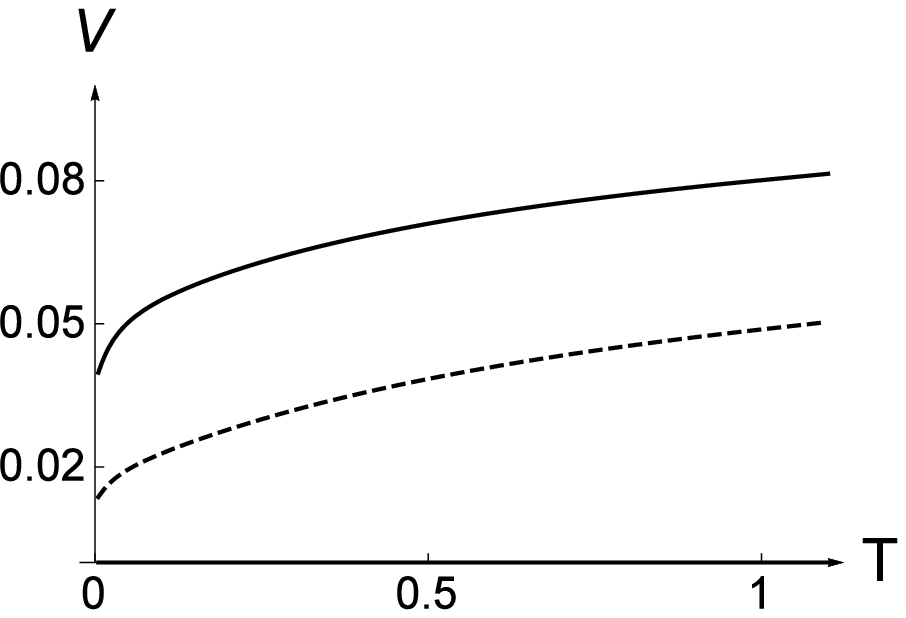}
\end{center}

{\par \leftskip=1.6cm \rightskip=1.6cm \small \ni  
\textbf{Figure 6.} The value function $V^{0,E}(0,x;T)$ associated with  the chooser strategy (solid), evaluated at $x = \theta$ and plotted  as a function of the deadline $T$ years.  For every $T$,  $V^{0,E}(0,\theta;T)$   dominates the value function $V^{1,E}(0, \theta;T)$ (dashed) for the long-short strategy. The parameters are: $T'=1$ year, $r=0.01, c=0.01, \theta=0.54, \mu=16, \zeta=0.16.$
\par} 
 \end{figure}

\begin{figure}[h]
\begin{center}
\includegraphics[scale=0.8]{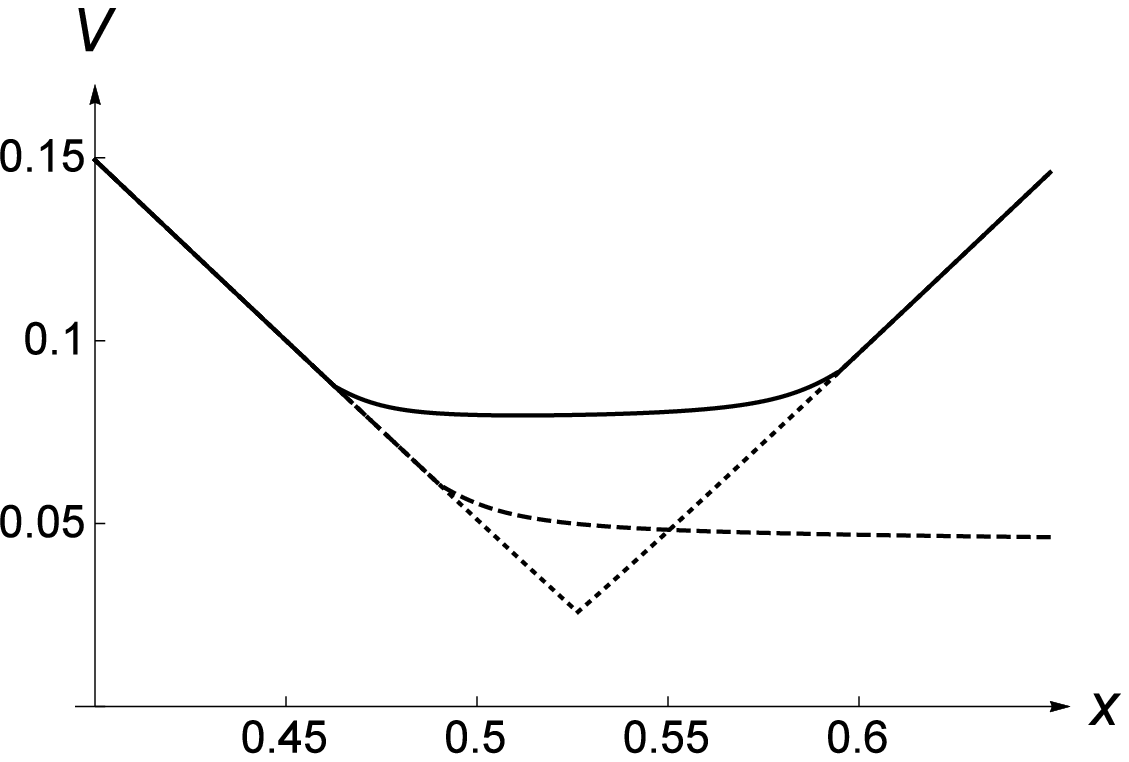}
\end{center}
{\par \leftskip=1.6cm \rightskip=1.6cm \small \ni 
\textbf{Figure 7.} The value function $V^{0,E}(0,x)$ (solid) as a function of $x$ corresponding to the optimal  entry problem with the chooser strategy versus   $V^{1,E}(x)$ (dashed) for the long-short strategy. The dotted line represents the payoff function $G^{0,E}(x)$ (see \eqref{0-payoff-ch}) associated with the value function $V^{0,E}(0,x)$. The parameters are: $T=T'=1$ year, $r=0.01, c=0.01, \theta=0.54, \mu=16, \zeta=0.16, \gamma^{1,L}=0.5545.$
\par}  
\end{figure}
 \clearpage

%
\begin{small}
\begin{spacing}{0.7}
\bibliographystyle{apa}
\bibliography{mybib2}

\begin{thebibliography}{}

\bibitem[\protect\astroncite{Ackerer et~al.}{2016}]{AFP2016}
Ackerer, D., Filipovic, D., and Pulido, S. (2016).
\newblock The {J}acobi stochastic volatility model.
\newblock Working paper.

\bibitem[\protect\astroncite{Avellaneda and Lee}{2010}]{Avellaneda2010}
Avellaneda, M. and Lee, J.-H. (2010).
\newblock Statistical arbitrage in the {US} equities market.
\newblock {\em Quantitative Finance}, 10(7):761--782.

\bibitem[\protect\astroncite{Brennan and Schwartz}{1990}]{futuresBS}
Brennan, M.~J. and Schwartz, E.~S. (1990).
\newblock Arbitrage in stock index futures.
\newblock {\em Journal of Business}, 63(1):S7--S31.

\bibitem[\protect\astroncite{Cox et~al.}{1985}]{CIR85}
Cox, J.~C., Ingersoll, J.~E., and Ross, S.~A. (1985).
\newblock A theory of the term structure of interest rates.
\newblock {\em Econometrica}, 53(2):385--408.

\bibitem[\protect\astroncite{Czichowsky et~al.}{2015}]{ZhangexpOU}
Czichowsky, C., Deutsch, P., Forde, M., and Zhang, H. (2015).
\newblock Portfolio optimization for an exponential {O}rnstein-{U}hlenbeck
  model with proportional transaction costs.
\newblock Working paper.

\bibitem[\protect\astroncite{Dai et~al.}{2011}]{futuresDaiKwok}
Dai, M., Zhong, Y., and Kwok, Y.~K. (2011).
\newblock Optimal arbitrage strategies on stock index futures under position
  limits.
\newblock {\em Journal of Futures Markets}, 31(4):394--406.

\bibitem[\protect\astroncite{d'Aspremont}{2011}]{d2011identifying}
d'Aspremont, A. (2011).
\newblock Identifying small mean-reverting portfolios.
\newblock {\em Quantitative Finance}, 11(3):351--364.

\bibitem[\protect\astroncite{De~Angelis and Kitapbayev}{2016}]{Kita1}
De~Angelis, T. and Kitapbayev, Y. (2016).
\newblock On the optimal exercise boundaries of swing put options.
\newblock {\em Mathematics of Operations Research}.
\newblock To appear.

\bibitem[\protect\astroncite{Dunis et~al.}{2013}]{GoldMinerSpreads2013}
Dunis, C.~L., Laws, J., Middleton, P.~W., and Karathanasopoulos, A. (2013).
\newblock Nonlinear forecasting of the gold miner spread: An application of
  correlation filters.
\newblock {\em Intelligent Systems in Accounting, Finance and Management},
  20(4):207--231.

\bibitem[\protect\astroncite{Ekstrom et~al.}{2011}]{Ekstrom2011}
Ekstrom, E., Lindberg, C., and Tysk, J. (2011).
\newblock Optimal liquidation of a pairs trade.
\newblock In Nunno, G.~D. and {\O}ksendal, B., editors, {\em Advanced
  Mathematical Methods for Finance}, chapter~9, pages 247--255.
  Springer-Verlag.

\bibitem[\protect\astroncite{Elliott et~al.}{2005}]{elliott2005pairs}
Elliott, R., Van Der~Hoek, J., and Malcolm, W. (2005).
\newblock Pairs trading.
\newblock {\em Quantitative Finance}, 5(3):271--276.

\bibitem[\protect\astroncite{Gatev et~al.}{2006}]{gatev2006pairs}
Gatev, E., Goetzmann, W., and Rouwenhorst, K. (2006).
\newblock Pairs trading: Performance of a relative-value arbitrage rule.
\newblock {\em Review of Financial Studies}, 19(3):797--827.

\bibitem[\protect\astroncite{Kanamura et~al.}{2010}]{kanamura2010profit}
Kanamura, T., Rachev, S., and Fabozzi, F. (2010).
\newblock A profit model for spread trading with an application to energy
  futures.
\newblock {\em The Journal of Trading}, 5(1):48--62.

\bibitem[\protect\astroncite{Kitapbayev and Leung}{2017}]{KitaLeung}
Kitapbayev, Y. and Leung, T. (2017).
\newblock Optimal mean-reverting spread trading: nonlinear integral equation
  approach.
\newblock {\em Annals of Finance}, 13(2):181--203.

\bibitem[\protect\astroncite{Leung et~al.}{2016}]{LeungLiLiZheng2015}
Leung, T., Li, J., Li, X., and Wang, Z. (2016).
\newblock Speculative futures trading under mean reversion.
\newblock {\em Asia-Pacific Financial Markets}, 23(4):281--304.

\bibitem[\protect\astroncite{Leung and Li}{2015}]{LeungLi2014OU}
Leung, T. and Li, X. (2015).
\newblock Optimal mean reversion trading with transaction costs and stop-loss
  exit.
\newblock {\em International Journal of Theoretical \& Applied Finance},
  18(3):15500.

\bibitem[\protect\astroncite{Leung and Li}{2016}]{meanreversionbook2016}
Leung, T. and Li, X. (2016).
\newblock {\em Optimal Mean Reversion Trading: Mathematical Analysis and
  Practical Applications}.
\newblock Modern Trends in Financial Engineering. World Scientific, Singapore.

\bibitem[\protect\astroncite{Leung et~al.}{2014}]{LeungLiWang2014CIR}
Leung, T., Li, X., and Wang, Z. (2014).
\newblock Optimal starting--stopping and switching of a {CIR} process with
  fixed costs.
\newblock {\em Risk and Decision Analysis}, 5(2):149--161.

\bibitem[\protect\astroncite{Leung et~al.}{2015}]{LeungLiWang2014XOU}
Leung, T., Li, X., and Wang, Z. (2015).
\newblock Optimal multiple trading times under the exponential {OU} model with
  transaction costs.
\newblock {\em Stochastic Models}, 31(4).

\bibitem[\protect\astroncite{Li}{2016}]{JiaoLi2016}
Li, J. (2016).
\newblock Trading {VIX} futures under mean reversion with regime switching.
\newblock {\em International Journal of Financial Engineering}, 03(03):1650021.

\bibitem[\protect\astroncite{Montana and
  Triantafyllopoulos}{2011}]{Montana2011}
Montana, G. and Triantafyllopoulos, K. (2011).
\newblock Dynamic modeling of mean reverting spreads for statistical arbitrage.
\newblock {\em Computational Management Science}, 8:23--49.

\bibitem[\protect\astroncite{Ngoa and Pham}{2016}]{PhamIGBM}
Ngoa, M.-M. and Pham, H. (2016).
\newblock Optimal switching for the pairs trading rule: A viscosity solutions
  approach.
\newblock {\em Journal of Mathematical Analysis and Applications}, 441(1):403
  -- 425.

\bibitem[\protect\astroncite{Ornstein and Uhlenbeck}{1930}]{Ornstein1930}
Ornstein, L.~S. and Uhlenbeck, G.~E. (1930).
\newblock On the theory of the {B}rownian motion.
\newblock {\em Physical Review}, 36:823--841.

\bibitem[\protect\astroncite{Peskir}{2005}]{Peskir2005a}
Peskir, G. (2005).
\newblock A change-of-variable formula with local time on curves.
\newblock {\em Journal of Theoretical Probability}, 18:499--535.

\bibitem[\protect\astroncite{Song et~al.}{2009}]{song2009stochastic}
Song, Q., Yin, G., and Zhang, Q. (2009).
\newblock Stochastic optimization methods for buying-low-and-selling-high
  strategies.
\newblock {\em Stochastic Analysis and Applications}, 27(3):523--542.

\bibitem[\protect\astroncite{Zhang and Zhang}{2008}]{zhang2008trading}
Zhang, H. and Zhang, Q. (2008).
\newblock Trading a mean-reverting asset: Buy low and sell high.
\newblock {\em Automatica}, 44(6):1511--1518.

\bibitem[\protect\astroncite{Zhao}{2009}]{ZhaoIGBM}
Zhao, B. (2009).
\newblock Inhomogeneous geometric {B}rownain motions.
\newblock Working paper.

\end{thebibliography}
\end{spacing}
\end{small}

\end{document}